\documentclass[reqno,english]{amsart}
\usepackage{amsmath}
\usepackage{mathtools}
\usepackage{comment}
\usepackage[english]{babel}


\usepackage[
backend=bibtex,
style=numeric,
]{biblatex}
\addbibresource{quantitativeconvergence.bib}

\newtheorem{thm}{Theorem}[section]
\newtheorem{prop}[thm]{Proposition}
\newtheorem{lem}[thm]{Lemma}
\newtheorem{cor}[thm]{Corollary}

\theoremstyle{definition}

\theoremstyle{remark}

\numberwithin{equation}{section}

\newcommand{\R}{\mathbb{R}}

\newcommand{\ca}[1]{\mathcal{#1}}
\newcommand{\mrm}[1]{\mathrm{#1}}

\newcommand{\dr}{\mathrm{d}}
\newcommand{\ve}{\varepsilon}
\newcommand{\de}{\delta}
\newcommand{\De}{\Delta}
\newcommand{\ga}{\gamma}
\newcommand{\iy}{\infty}
\newcommand{\p}{\partial}
\newcommand{\ti}{\times}

\newcommand{\ot}{\otimes}
\newcommand{\na}{\nabla}

\newcommand{\abs}[1]{\left |#1\right |}
\newcommand{\norm}[1]{\left\|#1\right\|}

\newcommand{\pa}[1]{\left( #1 \right)}

\newcommand{\T}{\mathbb{T}}
\newcommand{\eq}[1]{\begin{align} #1 \end{align} }
\newcommand{\mat}[1]{\begin{pmatrix} #1 \end{pmatrix}}
\newcommand{\eqq}[1]{\begin{align*} #1 \end{align*}}
\newcommand{\xra}[1]{\xrightarrow{#1}}
\newcommand{\Ga}{\Gamma}

\title{Quantitative Propagation of Chaos for the Mixed-Sign Viscous Vortex Model on the Torus}

\date{\today}

\author{Dominic Wynter}
\email{dlw49@cam.ac.uk}
\thanks{Department of Pure Mathematics and Mathematical Sciences, University of Cambridge, UK}

\setlength{\textwidth}{\paperwidth}
\addtolength{\textwidth}{-1.5in}
\calclayout

\newcommand{\Td}{{\mathbb{T}^d}}
\newcommand{\Rd}{{\mathbb{R}^d}}
\newcommand{\HN}{\mathcal{H}_N}

\begin{document}

\maketitle

\begin{abstract}
	We derive a quantiative propagation of chaos result for a mixed-sign point vortex system on $\T^2$ with independent Brownian noise, at an optimal rate. We introduce a pairing between vortices of opposite sign, and using the vorticity formulation of  2D Navier-Stokes, we define an associated \emph{tensorized} vorticity equation on $\T^2\ti\T^2$ with the same well-posedness theory as the original equation. Solutions of the new PDE can be projected onto solutions of Navier-Stokes, and the tensorized equation allows us to exploit existing propagation of chaos theory for identical particles.
\end{abstract}

\section{Introduction}

The point vortex model in 2D fluid dynamics is of  great interest in computational fluid dynamics \cite{Leonard1980} \cite{ChorinBernard1973} and in models of turbulence and chaos \cite{Ottino1989}, as well as in dynamical systems \cite{Aref2007}. A long standing problem has been to show that this model suitably approximates incompressible flow, but the non-Lipschitz nature of the Biot-Savart kernel has made this a delicate problem. We recall the vorticity form of Navier Stokes in two dimensions
\eq{\label{NS}
	\p_t \omega + u\cdot\na\omega = \nu\De\omega,\quad u = K*\omega
} which we obtain as the curl of the velocity equation, where we recover the velocity $u$ using the \emph{Biot-Savart }kernel $K$. On $\R^2$, this kernel is given as
\eq{
	K(x) = \frac{1}{2\pi}\frac{x^\perp}{\abs x^2}
} where we write $x^\perp = Jx = \begin{psmallmatrix}0&-1\\1&0\end{psmallmatrix}x$. On $\T^2$, of course, this has to be periodized, and we obtain the kernel 
\eq{
	K(x) = \frac{1}{2\pi}\frac{x^\perp}{\abs x^2}+\lim_{R\to\iy}\sum_{0<\abs k\le R} \frac{1}{2\pi}\frac{(x-k)^\perp}{\abs{x-k}^2},\qquad x\in[-1/2,1/2]^2,
} which can be checked to converge in $C^\iy$ \cite{Schochet1996}.
This kernel is therefore expressible as
\eqq{
	K(x) = \frac{1}{2\pi}\frac{x^\perp}{\abs x^2} + \ga(x),\quad x\in [-1/2-\ve,1/2+\ve]^2
} where $\ga:[-1/2-\ve,1/2+\ve]^2\to\R^2$ is a smooth correction. 

We now introduce the \emph{point vortex} model. In the inviscid case where $\nu=0$, if we take the initial vorticity $\omega_0 = \sum_{i=1}^N \alpha_i \de_{X_i}$ to be a discrete measure, then we would formally expect the solution to \eqref{NS} with this initial data to follow $\omega(t) = \sum_{i=1}^N \alpha_i \de_{X_i(t)}$ where the $X_i$ solve the \emph{point vortex} ODE
\eq{\label{pointvortexmodel}
	\dot X_i(t) = \sum_{j\ne i} \alpha_j K(X_i - X_j).
} 
This model is in fact very old, and dates back to the 19th century. It was introduced by Helmholtz \cite{Helmholtz1858}, and was studied by Poincaré as an early model for turbulent flow \cite{Poincare1893}, where he viewed them as vortex tubes in three dimensions. The first attempt to use the point vortex model for simulations appears to be by Rosenhead \cite{Rosenhead1931}, who in 1931 used it to approximate vortex sheets, however this model suffered from poor numerical stability \cite{Takami1964}. This problem seemed to be improved \cite{ChorinBernard1973} by using a cutoff Biot-Savart kernel  $K_\ve$, defined so that $$K_\ve(x) = K(x)$$ for $\abs x>\ve$, known as the \emph{vortex blob} model. Subsequent numerical research studied mainly this model \cite{Chorin1973}, and the first proofs for convergence of the vortex model to Euler used vortex blobs for smooth flows \cite{Hald1978}. These results were later extended to the original singular kernel \cite{Goodman-Hou-Louwengrub1990} \cite{Schochet1996} \cite{Serfaty2019}.

The viscous vortex was developed contemporaneously with the vortex blob model, and was first introduced in \cite{Chorin1973}. In this model, we introduce an independent Brownian noise, and obtain the SDE
\eq{\label{SDE1}
	\dr X^i = \frac1N\sum_{i=1}^N \alpha_i K(X^i-X^j)\,\dr t + \sqrt{2\nu}\sum_{i=1}^N\dr W^i_t,
} where $W^i_t$ are independent planar Brownian motions. This system was proved to have global solutions almost surely in \cite{Duerr1982}. However, we will work only with the associated Liouville equation,
\eq{\label{Liou1}
	\p_t\rho_N + \frac1N\sum_{i,j=1}^N \alpha_j K(x^i-x^j)\cdot\na_{x^i}\rho_N=\nu\sum_{i=1}^N\De_{x^i}\rho_N.
}
In \cite{Jabin-Wang2018}, Jabin and Wang prove a \emph{propagation of chaos} result in the case where $\alpha_i=1$, which we explain as follows. Given a solution $\rho_N$ to \eqref{Liou1} for $\alpha_i=1$ with $\rho_N\ge0$ and $\int\rho_N\,\dr X^N=1$, we assume for simplicity that $\rho_N(t,x^1,\ldots,x^N)$ is symmetric in the variables $x^1,\ldots,x^N$, and define the \emph{$k$-particle maringals} $\rho_{N,k}$ as 
\eq{
	\rho_{N,k}(x^1,\ldots,x^N)=\int_{\T^{2(N-k)}}\!{\rho_N(t,x^1,\ldots,x^N)}\,\dr x^{k+1}\ldots\,\dr x^N.
}

We first take $\omega>0$ to be a classical solution to \eqref{NS} normalized as a probability density. Denoting by $\bar\rho_N = \omega^{\ot N}$ the tensorized density $\bar\rho_N(x^1,\ldots,x^N) = \omega(x^1)\cdots\omega(x^N)$. In kinetic theory, \emph{chaos} refers to (near) independence of the distribution $\rho_N$, and so a \emph{propagation of chaos} result proves that $\rho_N(t,X)$ maintains (near) independence as $t$ increases. This notion is formulated in \cite{Jabin-Wang2018} using \emph{rescaled relative entropy}, which we define as 
\eq{
	\ca H_N(\rho_N\,|\,\bar\rho_N)(t) = \frac1N\int_{\T^{2N}}\!{\rho_N(t,X)\log\frac{\rho_N(t,X)}{\bar\rho_N(t,X)}}{\,\dr X}.}
 In this context, a propagation of chaos result is simply the statement that $\ca H_N(\rho_N^0\,|\,\bar\rho_N^0)\to 0$ as $N\to\iy$ implies $\ca H_N(\rho_N\,|\,\bar\rho_N)(t)\to0$ for all $t\ge 0$. More precisely, we have the following theorem from \cite{Jabin-Wang2018}.
 
\begin{thm}[\cite{Jabin-Wang2018}] For any classical solution $\omega>0$ of mass $1$ to \eqref{NS} and any entropy solution $\rho_N$ to \eqref{Liou1}, writing $\bar\rho_N = \omega^{\ot N}$, we have
\eq{
	\ca H_N(\rho_N\,|\,\bar\rho_N(t))\le e^{Mt}(\ca H_N(\rho_N^0\,|\,\bar\rho_N^0)(t)+\frac1N)
} for $M$ a constant dependent only on $\omega$ and $\frac1N\int\rho_N^0\log\rho_N^0$.
\end{thm}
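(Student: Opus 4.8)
The plan is to differentiate the rescaled relative entropy $\HN(\rho_N\,|\,\bar\rho_N)$ in time and then close a Gronwall inequality. Set $u = K*\omega$, so that $\omega$ solves $\p_t\omega + u\cdot\na\omega = \nu\De\omega$ and the tensorized density $\bar\rho_N = \omega^{\ot N}$ solves the mean-field Liouville equation $\p_t\bar\rho_N + \sum_i u(x^i)\cdot\na_{x^i}\bar\rho_N = \nu\sum_i\De_{x^i}\bar\rho_N$. Differentiating $\HN$, integrating by parts and subtracting the two evolution equations, and using that $K$ is divergence-free (so that $\na_{x^i}\cdot K(x^i - x^j) = 0$ and the compressibility term drops out), I expect the identity
\[
	\frac{\dr}{\dr t}\HN(\rho_N\,|\,\bar\rho_N) = -\frac{\nu}{N}\int_{\T^{2N}}\rho_N\abs{\na\log\tfrac{\rho_N}{\bar\rho_N}}^2\,\dr X - \frac1N\int_{\T^{2N}}\rho_N\sum_{i}R_i\cdot\na_{x^i}\log\tfrac{\rho_N}{\bar\rho_N}\,\dr X,
\]
where $R_i = \tfrac1N\sum_{j}K(x^i-x^j) - u(x^i)$ measures the fluctuation of the empirical drift about the mean-field drift. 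The first term is minus the relative Fisher information, the good dissipative term; the second is the error term generated by the singular kernel.

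Next I would control the error term by the dissipation together with the relative entropy. A weighted Young inequality absorbs half of the Fisher information and reduces matters to a bound on a quantity of the form $\tfrac1N\int\rho_N\,\Psi\,\dr X$ with $\Psi$ assembled from the two-body function $\phi(x,y) = K(x-y) - u(x)$, which enjoys the crucial cancellation $\int_{\T^2}\phi(x,y)\,\omega(y)\,\dr y = 0$. The central mechanism is the Gibbs (Legendre-duality) inequality
\[
	\frac1N\int_{\T^{2N}}\rho_N\,\Psi\,\dr X \le \HN(\rho_N\,|\,\bar\rho_N) + \frac1N\log\int_{\T^{2N}}\bar\rho_N\,e^{\Psi}\,\dr X,
\]
which trades the integral against the unknown $\rho_N$ for an explicit integral against the product measure $\bar\rho_N$, at the cost of the relative entropy. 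The task thereby becomes to bound the exponential moment $\tfrac1N\log\int\bar\rho_N\exp(\eta\sum_{i\ne j}\phi(x^i,x^j))$ uniformly in $N$ for $\eta$ small enough.

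This uniform exponential-moment estimate is the main obstacle. If $\phi$ were bounded, it would follow from a combinatorial large-deviation argument: Taylor-expand the exponential, and observe that because $\bar\rho_N$ is a product measure and $\phi$ has mean zero in its second variable, the overwhelming majority of the multi-index contributions to $\int\bar\rho_N(\sum_{i\ne j}\phi)^k$ vanish; a careful count of the surviving connected configurations then yields a geometrically convergent series once $\eta$ lies below a threshold fixed by $\norm{\phi}_{L^\iy}$. The difficulty is that the Biot-Savart kernel is genuinely singular, $K(x)\sim\abs x^{-1}$, so $K\notin L^\iy$ and no pointwise exponential moment exists. The way through is to exploit the structure $K = \na^\perp G$: since $K$ is homogeneous of degree $-1$, it can be written as $K = \mrm{div}\,V$ with $V\in L^\iy$, i.e.\ $K\in W^{-1,\iy}$. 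Integrating by parts in the error term transfers the derivative off the singular kernel and onto $\log(\rho_N/\bar\rho_N)$, replacing $K$ by the bounded field $V$ at the price of a second-order term; the combinatorial estimate then applies to the bounded two-body function built from $V$. The residual Hessian contributions are absorbed using the remaining Fisher information together with the a priori bound on the Boltzmann entropy coming from $\tfrac1N\int\rho_N^0\log\rho_N^0$, which is exactly why the constant $M$ is allowed to depend on this quantity.

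Collecting the estimates yields a differential inequality $\tfrac{\dr}{\dr t}\HN(\rho_N\,|\,\bar\rho_N)\le M\,\HN(\rho_N\,|\,\bar\rho_N) + \tfrac{C}{N}$, with $M$ and $C$ depending only on the stated norms of $\omega$ and on the initial entropy. Gronwall's lemma, after absorbing $C$ into the $\tfrac1N$ term, then gives $\HN(\rho_N\,|\,\bar\rho_N)(t)\le e^{Mt}\big(\HN(\rho_N^0\,|\,\bar\rho_N^0) + \tfrac1N\big)$, which is the claim. One technical point I would treat with care is that $\rho_N$ is only an entropy solution, so the entropy identity above should first be derived for a smooth regularization of $\rho_N$ and then passed to the limit, where it survives as the required inequality.
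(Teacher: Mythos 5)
Your proposal reproduces, essentially step for step, the strategy this paper uses for its own tensorized analogue (Theorem \ref{mainthm}) and which is the Jabin--Wang argument for the quoted statement: the entropy evolution identity with the relative Fisher information as dissipation (Lemma \ref{JW2018Lem2}), the Gibbs/Legendre duality inequality (Lemma \ref{JW2018Lem1}), the combinatorial large-deviation bounds on exponential moments of mean-zero two-body functions under the product measure (Propositions \ref{JW2018Thm3} and \ref{JW2018Thm4}), the representation $K=\na\cdot V$ with $V\in L^\iy$ followed by integration by parts, and a final Gronwall step; the remark about regularizing $\rho_N$ and passing to the limit in the entropy identity is also how the paper handles entropy solutions (Proposition \ref{entest}).

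The one point where you deviate, and where your mechanism would not close as stated, is the second-order term. After integrating by parts off $V$, the ``Hessian'' contribution has the form $\frac{1}{N^2}\sum_{i,j}\int\rho_N\,(V(x_i,x_j)-T_V\bar\rho(x_i)):\na_{x_i}^2\bar\rho_N/\bar\rho_N$, which contains no factor of $\na_{x_i}\log(\rho_N/\bar\rho_N)$; it therefore cannot be absorbed into the remaining Fisher information as you propose, and the initial Boltzmann entropy plays no role there. The paper instead verifies that the bounded two-body function $\phi(x,z)=(V(x,z)-T_V\bar\rho(x)):\na_x^2\bar\rho(x)/\bar\rho(x)$ satisfies the \emph{double} cancellation $\int\phi(x,z)\bar\rho(x)\,\dr x=0$ and $\int\phi(x,z)\bar\rho(z)\,\dr z=0$ (the first by integration by parts and divergence-freeness of $K$), and applies the second large-deviation estimate (Proposition \ref{JW2018Thm4}) to its exponential moment. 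Only the first-order term, which does carry $\na(\rho_N/\bar\rho_N)$, is split by Young's inequality against the Fisher information. With that correction your outline matches the paper's proof.
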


 A corollary of this is the following mean field result (Corollary 1 of \cite{Jabin-Wang2018} applied to the Navier-Stokes system).

\begin{cor} Let $\omega\in C([0,\iy),C^2(\T^2))$ be a solution to vorticity Navier Stokes \eqref{NS}, with $\omega>0$, with normalization $\int_{\T^2}\omega=1$, and let $\rho_N$ be a strong solution to the Liouville equation \eqref{Liou1} with $\alpha_i=1$ for all $i$ for all $N$. Writing $\bar\rho_N = \omega^{\ot N}$ as before, assume $\ca H_N(\rho_N^0\, |\,\bar\rho_N^0)\to\iy$ as $N\to\iy$. Then
$$
	\ca H_N(\rho_N\,|\,\bar\rho_N)\xra{N\to\iy}0\quad\hbox{ over any } [0,T]
$$uniformly. Consequently, we have a strong propagation of chaos result
$$
	\norm{\rho_{N,k}-\bar\rho_k}_{L^\iy([0,T],L^1(\T^{2k}))}\xra{N\to\iy}0.
$$ Finally, when $\sup_N N\ca H_N(\rho_N\,|\,\bar\rho_N)=H<\iy$, then there exists $C = C(H,T,k,\sup_N\ca H_N(\rho_N^0\,|\,1))$ such that the quantitative estimate
$$
	\norm{\rho_{N,k}-\bar\rho_k}_{L^\iy([0,T],L^1(\T^{2k}))}\le\frac{C}{\sqrt N}
$$ holds.
\end{cor}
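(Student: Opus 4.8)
The plan is to deduce all three assertions from the relative-entropy estimate of the preceding theorem by feeding it into two classical tools: the subadditivity (tensorization monotonicity) of relative entropy under marginalization against a product reference measure, and the Csiszár--Kullback--Pinsker inequality. Since $\bar\rho_N = \omega^{\ot N}$ is a product and $\rho_N$ is symmetric, rescaled relative entropy is non-increasing under marginalization, which for each fixed $k\le N$ gives
\eq{
	H_k(\rho_{N,k}\,|\,\omega^{\ot k}) \le \frac kN\, H_N(\rho_N\,|\,\bar\rho_N) = k\,\ca H_N(\rho_N\,|\,\bar\rho_N),
}
where $H_k,H_N$ denote the unrescaled relative entropies. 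Combining this with Csiszár--Kullback--Pinsker in the form $\norm{\mu-\nu}_{L^1}^2\le 2H(\mu\,|\,\nu)$ yields the pointwise-in-time master bound
\eq{
	\norm{\rho_{N,k}-\bar\rho_k}_{L^1(\T^{2k})}^2 \le 2k\,\ca H_N(\rho_N\,|\,\bar\rho_N)(t),
}
so that every marginal statement reduces to estimating the right-hand side by the theorem.

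For the first assertion I would restrict the theorem's estimate to $t\in[0,T]$: there $e^{Mt}\le e^{MT}$, so $\ca H_N(\rho_N\,|\,\bar\rho_N)(t)\le e^{MT}\pa{\ca H_N(\rho_N^0\,|\,\bar\rho_N^0)+\frac1N}$ uniformly in $t$, and the hypothesis $\ca H_N(\rho_N^0\,|\,\bar\rho_N^0)\to0$ together with $\frac1N\to0$ forces the left side to $0$ uniformly on $[0,T]$. The second assertion then follows by inserting this bound into the master $L^1$ estimate and taking the supremum over $t\in[0,T]$, giving $\norm{\rho_{N,k}-\bar\rho_k}_{L^\iy([0,T],L^1)}\le\pa{2k\,e^{MT}(\ca H_N(\rho_N^0\,|\,\bar\rho_N^0)+\tfrac1N)}^{1/2}\to0$.

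For the quantitative third assertion, read the hypothesis as $H:=\sup_N N\ca H_N(\rho_N^0\,|\,\bar\rho_N^0)<\iy$; multiplying the theorem's estimate by $N$ gives $N\ca H_N(\rho_N\,|\,\bar\rho_N)(t)\le e^{MT}(H+1)$ for all $t\in[0,T]$, whence the master bound reads
\eq{
	\norm{\rho_{N,k}-\bar\rho_k}_{L^\iy([0,T],L^1)} \le \sqrt{\frac{2k\,e^{MT}(H+1)}{N}} = \frac{C}{\sqrt N},\qquad C := \sqrt{2k\,e^{MT}(H+1)}.
}
The dependence of $C$ on $(H,T,k)$ is explicit, while the remaining dependence on $\sup_N\ca H_N(\rho_N^0\,|\,1)$ enters only through the theorem's constant $M$, which is controlled by $\omega$ and by $\frac1N\int\rho_N^0\log\rho_N^0 = \ca H_N(\rho_N^0\,|\,1)$. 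The one genuinely load-bearing step is the subadditivity inequality in the first display; the theorem, Pinsker, and the manipulation of $e^{Mt}$ are otherwise plug-and-play. Accordingly, I expect the only real care to lie in (i) justifying that rescaled relative entropy decreases under marginalization precisely because $\bar\rho_N$ is a product, which rests on the symmetry of $\rho_N$ and the chain rule for relative entropy, and (ii) checking that $M$ is uniform in $N$, i.e. that $\sup_N\ca H_N(\rho_N^0\,|\,1)<\iy$, so that the constant $C$ in the final rate is genuinely $N$-independent.
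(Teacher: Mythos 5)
Your argument is correct and is essentially the intended one: the paper does not prove this corollary itself (it is imported from Corollary 1 of \cite{Jabin-Wang2018}), and the route taken there is exactly yours --- subadditivity of relative entropy for marginals of a symmetric law against the product reference $\bar\rho_N=\omega^{\ot N}$, followed by Csisz\'ar--Kullback--Pinsker and the Gr\"onwall-type entropy bound of the preceding theorem, with the constant $M$ uniform in $N$ precisely because $\sup_N\ca H_N(\rho_N^0\,|\,1)<\iy$. You also correctly repaired the two typos in the statement ($\to\iy$ should read $\to 0$, and $H$ should bound the initial relative entropies); the only pedantic caveat is that the subadditivity $H_k\le \frac kN H_N$ in general carries an extra factor of $2$ when $k$ does not divide $N$, which is harmless for all three conclusions.
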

Whenever $(X^i(t))_{i=1}^N$ solves the SDE \eqref{SDE1} with $\alpha_i=1$ and $X^i(0)\sim\omega\,\dr x$ independently for all $1\le i\le N$, we have convergence
$$
	\frac1N\sum_{i=1}^N \de_{X^i(t)}\to\omega(t,x)\,\dr x
$$ almost surely in law, for all $t\ge0$.

We note, however, is that this is not a full approximation of 2D Navier-Stokes, since we have had to assume monotonicity $\omega>0$. As we shall prove in this paper, this condition can be relaxed by introducing a new limiting PDE, and then applying the large deviation type estimates used in \cite{Jabin-Wang2018}.

\subsection{Previous Results}

We compare our result mainly to that of \cite{Jabin-Wang2018}, where the force $K$ is allowed to belong to a very wide class, though particles are assumed to be identical, limiting its use to same sign results. Convergence results for mixed sign systems were obtained in \cite{Fournier-Hauray-Mischler2014} and \cite{Schochet1996} for the viscous and inviscid cases respectively, but in both cases without a rate. To the author's knowledge, this is the first quantitative propagation of chaos result which holds for a mixed sign viscous vortex model.

\section{Setup}

We now specialize to a \emph{two-species} mean-field result, where we wish to approximate the vorticity field $\omega$ by an empirical measure $$\frac1N\left[\alpha_+\sum_{i=1}^N\de_{X^i_+} -\alpha_-\sum_{i=1}^N\de_{X^i_-}\right]$$ for fixed $\alpha_\pm>0$. To do so, we define our SDE

\eq{\label{ourSDE}
	\dr X_\pm^i(t) = \frac{\alpha_+}{N}\sum_{j=1}^N K(X_\pm^i-X_+^j)\,\dr t
				-\frac{\alpha_-}{N}\sum_{j=1}^N K(X_\pm^i-X_-^j)\,\dr t + \sqrt{2\nu}\,\dr W^i_\pm(t)
				\qquad\hbox{ on }\T^2
} for $W^i_\pm(t)$ $2N$ independent Brownian motions. As before, however, we will deal not with the process itself, but with its probability density $\rho_N$, which will satisfy the Liouville equation
\eq{\label{Liouvillelong}
	\p_t\rho_N +\frac1N\sum_{i,j=1}^N\Big[&( \alpha_+K(x^i_+ - x^j_+) - \alpha_- K(x^i_+ - x^j_i))\cdot\na_{x_+^i}\rho_N \\
	&\quad+( \alpha_+K(x^i_i - x^j_+) - \alpha_- K(x^i_i - x^j_i))\cdot\na_{x_i^i}\rho_N \Big]
 = \nu\sum_{i=1}^N(\De_{x^i_+}\rho_N + \De_{x^i_-}\rho_N),
}
where we recall the convention that $K(0)=0$. To deal with the equation more effectively, we immediately make some notational simplifications. Firstly, we use $x^i = (x^i_+,x^i_-)$ for the rest of this paper, and define the \emph{weighted tensorized Biot-Savart kernel} to be 
\eq{
	\bar K(x,y) = \mat{ \alpha_+ K(x_+ - y_+) - \alpha_- K(x_+ - y_-) \\ 
					 \alpha_+ K(x_- - y_+) - \alpha_- K(x_- - y_-) } \in \R^4_{x_+,x_-},
} which allows us to dramatically simplify \eqref{Liouvillelong} as
\eq{\label{Liou}
	\p_t\rho_N + \frac1N\sum_{i,j=1}^N \bar K(x^i,x^j)\cdot\na_{x^i}\rho_N = \nu\sum_{i=1}^N\De_{x^i}\rho_N.
}

An immediate problem here is showing existence and entropy decay of solutions to \eqref{Liou}, given the roughness of the vector field $\bar K$. In general, the well-posedness theory for such problems can be hard, however, we do have a notion of \emph{entropy solutions}, which we will be using for the rest of the paper.

\begin{prop}\label{entest} Assume that $\int_{\T^{4N}}{\rho_N^0\log\rho_N^0}<\iy$, then \eqref{Liou} admits a (distributional) solution $\rho_N$ of time-dependent densities satisfying the entropy bound
\eq{
	\int_{\T^{4N}}\!{\rho_N(t,X)\log\rho_N(t,X)}{\,\dr X} + \nu\sum_{i=1}^N\int_0^t\!{\int_{\T^{4N}}\!{
		\frac{\abs{\na_{x_i}\rho_N}^2}{\rho_N}}{\,\dr X}}{\,\dr s}\le 
		\int_{\T^{4N}}\!{\rho_N^0(X)\log\rho_N^0(X)}{\,\dr X}
} for almost every $0\le t<\iy$. Moreover, if we assume $\rho_N^0$ to be symmetric, then for every $\phi\in L^2([0,T],W^{1,\iy}(\T^{2\ti 4}))$ with $\norm{\phi}_{L_T^2 W^{1,\iy}}\le1$, we have
\eq{
	\int_0^t\!\int_{\T^{2\ti 4}}\!{\phi(s,x^1,x^2)\bar K(x^1,x^2)\rho_{N,2}(s,x^1,x^2)}{,\dr x^1\,\dr x^2}{\,\dr s}
	\le C\pa{1+\frac{1}{\nu N}\int_{\T^{4N}}\!\rho_N^0\log\rho_N^0},
} so that $\bar K(x^1,x^2)\rho_{N,2}$ is well-defined, where $C$ depends only on $\alpha_\pm$.
\end{prop}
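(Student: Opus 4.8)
The plan is to construct $\rho_N$ by regularizing the singular kernel and passing to the limit, obtaining the entropy bound and the pairing estimate simultaneously, since the latter is exactly what licenses passage to the limit in the transport term. Concretely, I would replace $K$ by a smooth divergence-free truncation $K_\ve$ agreeing with $K$ for $\abs x>\ve$ (mollifying the stream function preserves the $\na^\perp$-structure), form the associated kernel $\bar K_\ve$, and solve the Liouville equation \eqref{Liou} with this smooth bounded drift by standard linear parabolic theory. This yields smooth, strictly positive, mass-one solutions $\rho_N^\ve$ on which all the formal manipulations below are justified.

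The entropy estimate comes from the divergence-free structure. Writing $K=\na^\perp G$ for the Green's function $G$ of $-\De$ on $\T^2$ (so that $G$ has only a logarithmic singularity), each block of $\bar K$ is divergence-free in its first argument, hence the total drift $X\mapsto\frac1N\sum_j\bar K(x^i,x^j)$ is divergence-free in each $x^i$. Testing the regularized equation against $1+\log\rho_N^\ve$ then makes the transport term vanish and leaves exactly the Fisher dissipation, giving the claimed inequality uniformly in $\ve$. Since $\T^{4N}$ carries a probability measure, Jensen gives $\int\rho_N^\ve\log\rho_N^\ve\ge0$, so the dissipation is controlled by the initial entropy alone, $\nu\sum_i\int_0^t\int\abs{\na_{x^i}\rho_N^\ve}^2/\rho_N^\ve\le\int\rho_N^{0}\log\rho_N^{0}$. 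The $L\log L$ bound yields equi-integrability (Dunford--Pettis) and weak $L^1$ compactness of $\rho_N^\ve$, while the Fisher bound gives $\sqrt{\rho_N^\ve}$ bounded in $L^2_tH^1$, hence strong $L^2$ (thus $L^1$) convergence along a subsequence; the entropy inequality is then inherited by lower semicontinuity.

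For the pairing estimate---the heart of the matter---I would first transfer all control to the two-particle marginal. Super-additivity of entropy gives $\int\rho_{N,2}\log\rho_{N,2}\le\frac2N\int\rho_N\log\rho_N$, and convexity of $(\rho,\xi)\mapsto\abs\xi^2/\rho$ under conditioning gives $\int\abs{\na_{x^1,x^2}\rho_{N,2}}^2/\rho_{N,2}\le\frac2N\sum_i\int\abs{\na_{x^i}\rho_N}^2/\rho_N$; integrating in time and using the dissipation bound produces precisely the factor $\frac{1}{\nu N}\int\rho_N^0\log\rho_N^0$. To bound $\int\phi\,\bar K\rho_{N,2}$ itself I would integrate by parts using $\bar K=\na^\perp G$ in the appropriate block of variables, trading the non-integrable $\abs x^{-1}$ singularity of $K$ for the merely logarithmic singularity of $G$: this produces one term pairing $\na\phi$ against $G\rho_{N,2}$ and one pairing $\na\rho_{N,2}$ against $G\phi$. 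Since each block of $\bar K$ depends only on a two-dimensional difference variable, I would estimate these by slicing in that variable and invoking two-dimensional inequalities: $G$ is exponentially integrable ($e^{\eta\abs G}\in L^1$ for small $\eta$), so the first term is controlled by the marginal entropy through the Gibbs/Donsker--Varadhan inequality, while the second is controlled by the marginal Fisher information via Cauchy--Schwarz together with the Moser--Trudinger/Sobolev bound $\norm{\sqrt{\rho_{N,2}}}_{L^q}\le C\norm{\sqrt{\rho_{N,2}}}_{H^1}$. Integrating in time against $\norm\phi_{L^2_tW^{1,\infty}}\le1$ and the dissipation bound then gives the stated estimate with $C=C(\alpha_\pm)$.

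The main obstacle is precisely this last estimate: the entropy bound alone is useless here because $e^{\abs K}\notin L^1$, so one cannot pair $\rho_{N,2}$ against $K$ directly. Everything hinges on exploiting the stream-function structure to convert $\abs x^{-1}$ into $\log\abs x$, and on the dimensional reduction that lets the sharp two-dimensional exponential-integrability and Fisher-information inequalities apply; the time integration must be arranged by Cauchy--Schwarz so that only the first power of the time-integrated Fisher information---the quantity actually controlled by the dissipation---appears. Finally, the uniform-in-$\ve$ version of this same bound is what prevents concentration of $\bar K_\ve\rho_{N,2}^\ve$ and legitimizes passage to the limit in the transport term, closing the existence argument.
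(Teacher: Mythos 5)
Your construction of the solution and the entropy bound follows essentially the paper's own route: regularize $\bar K$ by a smooth divergence-free kernel, test the regularized Liouville equation against $1+\log\rho_N^\ve$ so that the transport term vanishes and only the Fisher dissipation survives, and pass to the limit under the uniform entropy/dissipation bounds (the paper extracts compactness from a $W^{1,1}$ bound via Cauchy--Schwarz rather than from $\sqrt{\rho_N^\ve}\in L^2_tH^1$, but this is cosmetic). Where you genuinely diverge is the pairing estimate. You integrate by parts through the stream function, writing $K=\na^\perp G$ with $G$ the log-singular Green's function, and then need exponential integrability of $G$ together with Moser--Trudinger/Sobolev control of $\norm{\sqrt{\rho_{N,2}}}_{L^q}$ to handle the resulting $\int G^2\rho_{N,2}$ term; note that the marginal entropy alone cannot control that term, since $e^{\eta G^2}\notin L^1$ for a logarithmic singularity, so the Fisher information must enter a second time and you have to check that the final bound remains linear in the time-integrated dissipation --- and the proposed slicing in the two-dimensional difference variable needs its own convexity/marginalization step, since $\rho_{N,2}$ lives on $\T^{2\ti4}$. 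The paper sidesteps all of this with a single device borrowed from Jabin--Wang: the Biot--Savart kernel is the divergence of a \emph{bounded} matrix field, $K=\na\cdot V_0$ with $V_0(x)=\frac{1}{2\pi}\frac{x\ot x^\perp}{\abs x^2}+\Gamma(x)$ homogeneous of degree zero plus a smooth correction. One integration by parts with the tensorized $V$ then bounds the pairing by $\norm{V}_{L^\iy}\norm{\phi}_{W^{1,\iy}}\pa{\int\abs{\na_{x_1}\rho_{N,2}}^2/\rho_{N,2}}^{1/2}$, and the marginal Fisher information is dominated by the full one by exactly the convexity you invoke; Young's inequality and the time integration finish the proof with $C=C(\alpha_\pm)$. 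Your route can likely be made rigorous, but the bounded-divergence representation is both shorter and is reused later in the paper (Lemma \ref{JW2018Lem3}), so it is the more economical choice here.
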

The proof of this is defered to the final section. We note additionally that a strong well-posedness theory exists for $\rho_N^0\in L^p(\T^{4N})$ for $p>2$, as can be checked by standard arguments, however we will not use this fact.

The key idea of this paper is to adapt the arguments of \cite{Jabin-Wang2018} for a \emph{tensorized} system, which we explain as follows. Since mixed sign solutions $\omega$ to \eqref{NS} cannot be naturally represented as probability densities, in order to use \cite{Jabin-Wang2018}, we are forced to consider positive and negative vorticity separately in our mean-field equation. More precisely, we introduce a \emph{tensorized Navier-Stokes} equation
\eq{\label{TNS}
	\p_t\bar\omega + \bar u\cdot\na\bar\omega = \nu\De\bar\omega,\quad \bar u= \bar{\ca K}\bar\omega
} where $\bar{\ca K}$ is the weighted tensorized Biot-Savart operator
\eq{
	\bar{\ca K}\bar\omega(x) = \int_{\T^4}\!{\bar K(x,y)\omega(y)}{\,\dr y}.
} 
We note that this is a generalization of vorticity Navier Stokes in a natural way. For any probability density $\phi:\T^2_{x_+}\ti\T^2_{x_-}\to\R$, we define projection functionals
\eqq{
	\mrm{pr}_\pm\phi(x_\pm) = \int_{\T^2}\!\phi(x_+,x_-)\,\dr x_\mp.
}
 Now, for any classical solution $\bar\omega$ to \eqref{TNS} (that is, $\omega\in C^1([0,\iy),C^0)\cap C^0([0,\iy),C^2)$), we obtain \emph{marginal distributions} $\omega_\pm = \mrm{pr}_\pm\bar\omega$, from which we may define the \emph{associated vorticity}
\eq{\label{TNScancel}
	\omega(t,x) = [\alpha_+\mrm{pr}_+-\alpha_-\mrm{pr}_-]\bar\omega(t,x).
} We have the following useful proposition, which motivates the introduction of \eqref{TNS}.
\begin{prop} Let $\bar\omega$ be a classical solution to \eqref{TNS} on $\T^{2\ti 2}$. Then $\omega =  [\alpha_+\mrm{pr}_+-\alpha_-\mrm{pr}_-]\bar\omega$ is a classical solution to Navier-Stokes.
\end{prop}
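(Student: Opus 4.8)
The plan is to apply the two projection functionals $\mrm{pr}_\pm$ to the tensorized equation \eqref{TNS} and then take the linear combination $\alpha_+\mrm{pr}_+ - \alpha_-\mrm{pr}_-$ prescribed by \eqref{TNScancel}. Throughout I would identify both copies $\T^2_{x_+}$ and $\T^2_{x_-}$ with a single torus $\T^2$ carrying a coordinate $x$, so that $\omega_\pm = \mrm{pr}_\pm\bar\omega$ and $\omega = \alpha_+\omega_+ - \alpha_-\omega_-$ become genuine functions of $x$. Since $\bar\omega$ is assumed classical, $\bar u = \bar{\ca K}\bar\omega$ is smooth (the Biot--Savart operator is smoothing), so every manipulation below---differentiation under the integral, Fubini, and integration by parts on the torus---is justified.

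The crux is an identity for the tensorized velocity. Writing $\bar u = (\bar u_+,\bar u_-)$ for the two $\R^2$-blocks corresponding to the $x_+$ and $x_-$ directions, I would compute directly from the definition of $\bar K$ and Fubini that
\[
\bar u_+(x) = \alpha_+\int_{\T^2} K(x_+ - y_+)\,\omega_+(y_+)\,\dr y_+ - \alpha_-\int_{\T^2} K(x_+ - y_-)\,\omega_-(y_-)\,\dr y_- = \big(K*(\alpha_+\omega_+ - \alpha_-\omega_-)\big)(x_+),
\]
that is, $\bar u_+(x) = u(x_+)$ with $u = K*\omega$, and symmetrically $\bar u_-(x) = u(x_-)$. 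In particular $\bar u_+$ depends only on $x_+$ and $\bar u_-$ only on $x_-$; each block is divergence-free in its own variable because $K = \na^\perp G$ is divergence-free, whence $\bar u$ is divergence-free on $\T^4$. This is precisely the cancellation for which the weights $\alpha_\pm$ in $\bar K$ were engineered, and it is the only genuinely nontrivial step: the mixed-sign structure of \eqref{NS} is encoded here in the fact that \emph{both} marginal drifts collapse to the \emph{same} field $u = K*\omega$.

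With this in hand the remaining terms are routine. Writing $\De = \De_{x_+} + \De_{x_-}$ and, using $\na\cdot\bar u = 0$, $\bar u\cdot\na\bar\omega = \na_{x_+}\cdot(\bar u_+\bar\omega) + \na_{x_-}\cdot(\bar u_-\bar\omega)$, I would apply $\mrm{pr}_+$ to \eqref{TNS}. The time derivative commutes with $\mrm{pr}_+$; the $\De_{x_-}$ term and the $\na_{x_-}$ transport term integrate to zero in $x_-$ by periodicity; and since $\bar u_+(x) = u(x_+)$ is independent of $x_-$, the surviving transport contribution is $\na_{x_+}\cdot(u\,\omega_+) = \na\cdot(u\omega_+)$. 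Thus $\mrm{pr}_+$ yields $\p_t\omega_+ + \na\cdot(u\omega_+) = \nu\De\omega_+$, and symmetrically $\p_t\omega_- + \na\cdot(u\omega_-) = \nu\De\omega_-$. Taking $\alpha_+$ times the first minus $\alpha_-$ times the second gives $\p_t\omega + \na\cdot(u\omega) = \nu\De\omega$; finally $u$ is divergence-free, so $\na\cdot(u\omega) = u\cdot\na\omega$, and together with $u = K*\omega$ this is exactly \eqref{NS}. The main obstacle is therefore concentrated entirely in the velocity identity of the second paragraph; once the two marginal equations decouple through the common drift $u$, the conclusion follows by linearity.
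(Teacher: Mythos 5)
Your proposal is correct and follows essentially the same route as the paper: the key step in both is the identity $\bar u(x_+,x_-) = (u(x_+),u(x_-))$ with $u = K*\omega$, obtained by Fubini from the definition of $\bar K$, after which the projection of the Laplacian and transport terms is routine. The only cosmetic difference is that you derive the two marginal equations for $\omega_\pm$ separately and then take the weighted difference, whereas the paper applies the combined operator $\alpha_+\mrm{pr}_+-\alpha_-\mrm{pr}_-$ directly to each term.
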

\begin{proof}The key step is to notice that $\bar u(x_+,x_-) = (u(x_+),u(x_-))$ (where we have omitted the time variable for concision). To see this, we simply compute
\eqq{
	\bar u(x_+,x_-) &= \int\!\int\!\mat{\alpha_+ K(x_+-x_+')-\alpha_- K(x_+-x_-')\\ \alpha_+ K(x_--x_+')-\alpha_- K(x_--x_-')}\bar\omega(x_+',x_-')\,\dr x_-'\,\dr x_+' \\
	&=\mat{\int\! K(x_+-x')\omega(x')\,\dr x' \\ \int\! K(x_--x')\omega(x')\,\dr x'} = \mat{u(x_+)\\u(x_-)}.
} Secondly, we compute that
\eqq{
	\int\! u(x_\pm)\cdot\na_{x_\pm}\bar\omega(x_+,x_-)\,\dr x_\pm = \int\!\na_{x_\pm}\cdot u(x_\pm)\bar\omega(x_+,x_-)\,\dr x_\pm=0.
} We can now apply the operator $\ca P= [\alpha_+\mrm{pr}_+-\alpha_-\mrm{pr}_-]$ to each term of tensorized Navier-Stokes \eqref{TNS}. First, we easily get that $\p_t\ca P\bar\omega = \ca P\p_t\bar\omega$. As for the Laplacian term, we can calculate that
$$
	(\mrm{pr}_\pm\De\bar\omega)(x_\pm) = \int\![\De_{x_+}+\De_{x_-}]\bar\omega(x_+,x_-)\,\dr x_\mp
	=\De_{x_\pm}\mrm{pr}_\pm\bar\omega(x_\pm)
$$ so that $\ca P\De\bar\omega = \De\ca P\bar\omega$. Finally, cancellation in \eqref{TNScancel} gives us that
$$
	\ca P(\bar u\cdot\na\bar\omega)(x) = \alpha_+\int\! u(x)\cdot\na_x\bar\omega(x,x_-)\,\dr x_-
		-\alpha_-\int\! u(x)\dot\na x\bar\omega(x_+,x)\,\dr x_+ = u(x)\cdot\na\omega(x)
$$ which proves that $\omega$ solves Navier Stokes.
\end{proof}

Having now motivated the introduction of \eqref{TNS}, we would like to show that classical solutions exist. The key fact we will use is that $\bar{\ca K}$ is a bounded operator $C^k(\T^4)\to C^k(\T^4)^4$, as can easily be checked. We note that \eqref{TNS} has a similar solution theory to the classical 2D Navier Stokes equation. In fact, we have the following general well-posedness result, which holds for any velocity operator $K:\bar\omega\mapsto\bar u$ with general boundedness properties.

\begin{thm}\label{nonlinnonlocthm} Fix some $k\ge0$, and let $K:C^0(\Td)\to C^0(\Td)^d$ be a velocity operator satisfying the conditions
\begin{enumerate}
	\item $\na\cdot(K\omega)=0$ weakly for all $\omega\in C^0(\Td)$ (incompressibility), and
	\item $K$ continuously maps $C^\ell(\Td)$ to $C^\ell(\Td)$ for all $0\le\ell\le k$ (boundedness).
\end{enumerate}

Then the PDE
\eq{\label{nonlinnonloc}
	\p_t\omega + u\cdot\na\omega = \nu\De\omega,\quad u =K\omega\qquad\hbox{ on }\Td
} with initial conditions $\omega_0\in C^k(\Td)$ is well-posed in $C([0,\iy),C^k(\Td))$ for any $k$, and furthermore, solutions $\omega$ lie in $C((0,\iy),C^k(\Td))$ even when we only have $\omega\in C^0(\Td)$.
\end{thm}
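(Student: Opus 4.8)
The plan is to treat \eqref{nonlinnonloc} as a semilinear parabolic equation and solve it by a Duhamel fixed-point argument built on the heat semigroup $e^{\nu t\De}$ generated by $\nu\De$ on $\Td$. Using incompressibility (condition (1)) I would rewrite the transport term as $u\cdot\na\omega = \na\cdot(u\omega)$, so that a mild solution satisfies
\eq{
	\omega(t) = e^{\nu t\De}\omega_0 - \int_0^t e^{\nu(t-s)\De}\na\cdot\pa{(K\omega(s))\,\omega(s)}\,\dr s.
}
The two analytic ingredients are the parabolic smoothing estimate $\norm{e^{\nu t\De}\na\cdot F}_{C^\ell}\le C(\nu t)^{-1/2}\norm{F}_{C^\ell}$ and the bilinear bound $\norm{(K\omega_1)\omega_2}_{C^k}\le C\norm{\omega_1}_{C^k}\norm{\omega_2}_{C^k}$, which follows from condition (2) together with the fact that $C^k(\Td)$ is a Banach algebra. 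Since the kernel singularity $(t-s)^{-1/2}$ is integrable, a contraction mapping argument on $C([0,T],C^k(\Td))$ yields local existence and uniqueness on an interval $[0,T]$ whose length depends only on $\norm{\omega_0}_{C^k}$ and $\nu$. To keep the semigroup estimates clean at integer regularity I would run the argument in a Hölder scale $C^{k,\alpha}$ and recover the $C^k$ statement afterward.

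For global existence, the main point — and the place where incompressibility is used a second time — is an a priori $C^k$ bound on every finite interval. Because $\na\cdot u = 0$, the equation obeys a maximum principle, giving $\norm{\omega(t)}_{L^\iy}\le\norm{\omega_0}_{L^\iy}$ and hence a uniform-in-time bound $\norm{u(t)}_{C^0} = \norm{K\omega(t)}_{C^0}\le C\norm{\omega_0}_{C^0}$. Inserting the Leibniz-type estimate $\norm{u\omega}_{C^k}\le C(\norm{u}_{C^0}\norm{\omega}_{C^k}+\norm{u}_{C^k}\norm{\omega}_{C^0})$ into the Duhamel formula, the uniform $C^0$ control converts what is naively a quadratic estimate into one that is \emph{linear} in $\norm{\omega}_{C^k}$:
\eq{
	\norm{\omega(t)}_{C^k}\le C\norm{\omega_0}_{C^k}+C\int_0^t(\nu(t-s))^{-1/2}\norm{\omega(s)}_{C^k}\,\dr s,
}
with $C$ depending only on $\norm{\omega_0}_{C^0}$ and $\nu$. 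A singular Grönwall inequality then bounds $\norm{\omega(t)}_{C^k}$ by a finite quantity on each $[0,T]$, and since the local existence time depends only on this norm, a standard continuation argument extends the solution to $[0,\iy)$.

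Finally, the instantaneous smoothing claim follows by re-running the fixed-point argument in a time-weighted space with norm $\sup_{0<t\le T}\norm{\omega(t)}_{C^0}+\sup_{0<t\le T}(\nu t)^{k/2}\norm{\omega(t)}_{C^k}$, so that the full smoothing $\norm{e^{\nu t\De}\omega_0}_{C^k}\le C(\nu t)^{-k/2}\norm{\omega_0}_{C^0}$ of the heat semigroup produces a $C^k$ solution for $t>0$ out of merely continuous data. I expect the main obstacle to be the global step: verifying that the maximum principle and the Leibniz estimate genuinely linearize the a priori bound, and controlling the singular kernel in the resulting Grönwall argument. The local theory and the smoothing estimate are routine once the semigroup and algebra bounds are in place.
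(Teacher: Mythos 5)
Your proposal is correct and follows essentially the same route as the paper: a Duhamel/contraction argument in $C([0,T],C^k(\Td))$ using the $(t-s)^{-1/2}$ smoothing of $e^{t\De}\na\cdot$, a maximum principle to lock down the $C^0$ norm, and a linearized singular Gr\"onwall bound to preclude blow-up of the higher norms, with the instantaneous $C^0\to C^k$ smoothing coming from the parabolic gain of one derivative per factor $t^{-1/2}$. The only cosmetic differences are that the paper obtains the smoothing claim by an inductive bootstrap on $[\ve,T]$ rather than a time-weighted fixed-point space, and works directly at integer regularity via $\norm{\na G_t}_{L^1}\le Ct^{-1/2}$ without passing through a H\"older scale.
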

We note that this includes 2D Navier-Stokes in vorticity form, as well as our weighted tensorized generalization of 2D Navier Stokes. We prove this result in Section \ref{wellposedness}.

\section{Propagation of Chaos}

\begin{thm}\label{mainthm} Let $\bar\omega>0$ be a classical solution to tensorized Navier Stokes \eqref{TNS} normalized as $\int_{\T^4}\bar\omega\,\dr x=1$, and let $\rho_N$ be an entropy solution to the Liouville equation \eqref{Liou} defined for time $[0,T]$. Writing $\bar\rho_N = \bar\omega^{\ot N}$, we have the entropy estimate
\eq{
	\ca H_N(\rho_N\,|\,\bar\rho_N)(t)\le e^{Ct}\pa{\ca H_N(\rho_N^0\,|\,\bar\rho_N^0)+\frac1N},\quad 0\le t\le T
} where $C=C(\alpha_+,\nu,\inf\bar\omega_0,\norm{\bar\omega}_{L_T^\iy C^2})$.
\end{thm}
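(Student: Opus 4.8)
The plan is to observe that, in the tensorized variables $x^i=(x_+^i,x_-^i)\in\T^4$, the Liouville equation \eqref{Liou} describes $N$ \emph{identical} particles interacting through the single kernel $\bar K$, whose mean-field limit is the tensorized Navier--Stokes solution $\bar\omega$. This is exactly the identical-particle setting of \cite{Jabin-Wang2018}, so the whole relative-entropy method applies once we check that $\bar K$ meets the structural hypotheses of their large-deviation estimates. Thus the first step is to record three properties of $\bar K$, each inherited block-by-block from the periodized Biot--Savart kernel $K=\frac1{2\pi}x^\perp/\abs x^2+\ga$: incompressibility $\na_x\cdot\bar K(x,y)=0$ (since each entry is a translate of the divergence-free $K$), the boundedness $\bar{\ca K}:C^k\to(C^k)^4$ already noted in the text, and the representation $\bar K(x,y)=\na_x\cdot\bar V(x,y)$ by an antisymmetric field $\bar V$ with only a logarithmic singularity (concretely $K=\frac1{2\pi}\na\cdot(\log\abs\cdot\,J)$), hence lying in the exponential-Orlicz class required below. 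Since $\bar u=\bar{\ca K}\bar\omega=(u(x_+),u(x_-))$ is likewise divergence-free, the fluctuation $\psi(x,y):=\bar K(x,y)-\bar u(x)$ satisfies $\na_x\cdot\psi=0$ and, because $\int_{\T^4}\bar\omega=1$, the cancellation $\int_{\T^4}\psi(x,y)\bar\omega(y)\,\dr y=0$.

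Next I would carry out the relative-entropy computation. Writing $h=\log(\rho_N/\bar\rho_N)$ and using that $\rho_N$ is an entropy solution to \eqref{Liou} while $\bar\rho_N=\bar\omega^{\ot N}$ solves the Liouville equation with the mean-field drift $\bar u(x^i)$, integration by parts yields
\eq{
	\frac{\dr}{\dr t}\ca H_N(\rho_N\,|\,\bar\rho_N)\le-\frac{\nu}{N}\int_{\T^{4N}}\!\rho_N\abs{\na h}^2\,\dr X+\frac1{N^2}\int_{\T^{4N}}\!\rho_N\sum_{i,j}\psi(x^i,x^j)\cdot\na_{x^i}h\,\dr X.
}
The first term is the negative relative Fisher information, which we hold in reserve to absorb the singular part of the second, error, term.

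The third and main step is to control the error term, and this is where the machinery of \cite{Jabin-Wang2018} does the real work. One must not square $\psi$ directly, as its Biot--Savart singularity $\sim\abs{x-y}^{-1}$ is not square-integrable in two dimensions; instead one uses the divergence structure $\psi=\na_x\cdot\bar V$ to integrate by parts in $x^i$, trading $\psi$ for the logarithmically-singular potential $\bar V$ at the cost of a gradient, which is precisely the factor absorbed by the reserved Fisher information (the diagonal $i=j$ terms, where $\psi(x^i,x^i)=-\bar u(x^i)$ is bounded, are lower order). After this absorption the remaining contribution has the combinatorial form $\frac1{N^2}\int\rho_N\sum_{i,j}\Phi(x^i,x^j)$ for a function $\Phi$ built from $\bar V$ and $\na\log\bar\omega$ that retains the requisite mean-zero cancellations against $\bar\omega$ inherited from $\psi$. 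By the Gibbs variational principle,
\eq{
	\frac1{N^2}\int_{\T^{4N}}\!\rho_N\sum_{i,j}\Phi(x^i,x^j)\,\dr X\le\frac1c\,\ca H_N(\rho_N\,|\,\bar\rho_N)+\frac1{cN}\log\int_{\T^{4N}}\!\bar\rho_N\,\exp\pa{\frac cN\sum_{i,j}\Phi(x^i,x^j)}\,\dr X,
}
and the large-deviation lemma of \cite{Jabin-Wang2018} bounds the exponential moment by a constant uniformly in $N$, provided $c$ is small relative to the exponential-Orlicz norm of $\Phi$; this is exactly where the logarithmic rather than algebraic singularity of $\bar V$ is indispensable. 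The surviving term is $O(1/N)$.

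Combining the three steps gives a differential inequality $\frac{\dr}{\dr t}\ca H_N\le C\,\ca H_N+C/N$ with $C=C(\alpha_+,\nu,\inf\bar\omega_0,\norm{\bar\omega}_{L_T^\iy C^2})$ — the dependence on $\inf\bar\omega_0>0$ entering through $\na\log\bar\omega$ and the relative-entropy manipulations, on $\norm{\bar\omega}_{L_T^\iy C^2}$ through the smoothness of the mean field, and on $\nu$ through the Fisher-information absorption — and Gr\"onwall's lemma yields the stated bound. I expect the principal obstacle to be precisely the exponential-moment estimate of step three: verifying that the tensorized singular kernel $\bar K$ fits the precise integrability hypotheses of the Jabin--Wang large-deviation lemma uniformly in $N$, which hinges entirely on the divergence-free structure being preserved under tensorization and on the resulting potential $\bar V$ remaining only log-singular.
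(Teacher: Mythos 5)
Your overall strategy is the paper's: treat the tensorized particles as identical with interaction $\bar K$, run the Jabin--Wang relative-entropy computation (Lemma \ref{JW2018Lem2}), integrate by parts using a potential with $\bar K=\na_x\cdot V$ to trade the singularity for a gradient absorbed by the Fisher information, control the remaining combinatorial terms by the Gibbs inequality (Lemma \ref{JW2018Lem1}) plus the large-deviation estimates (Propositions \ref{JW2018Thm3}--\ref{JW2018Thm4}), and close with Gr\"onwall. However, there are two concrete errors. First, and most importantly, your treatment of the diagonal is wrong: with the convention $K(0)=0$ one has $\bar K(x,x)=(-\alpha_-K(x_+-x_-),\,\alpha_+K(x_--x_+))$, which is \emph{not} zero and is singular on the set $\set{x_+=x_-}$ --- it is the interaction between the paired opposite-sign vortices inside a single tensorized particle. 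So $\psi(x^i,x^i)$ is not $-\bar u(x^i)$ and is not bounded, and the diagonal cannot be dismissed as lower order. This is precisely the one point at which the tensorized system fails to be a verbatim instance of the identical-particle setting, and the paper deals with it by introducing a separate bounded potential $\upsilon$ with $\na_x\cdot\upsilon(x)=\bar K(x,x)$ and carrying the diagonal through the same integration by parts as the off-diagonal terms.

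Second, your choice of potential $K=\tfrac{1}{2\pi}\na\cdot(\log\abs{\cdot}\,J)$ is log-singular, which forces you into Orlicz-class versions of the exponential-moment lemmas; the paper instead uses $V_0(x)=\tfrac{1}{2\pi}\tfrac{x\ot x^\perp}{\abs x^2}+\Gamma(x)$, which is \emph{bounded}, so that Propositions \ref{JW2018Thm3} and \ref{JW2018Thm4} apply exactly as stated with their $L^\iy$ hypotheses. Your remark that the logarithmic singularity is ``indispensable'' has this backwards --- the point of the construction is that the potential can be taken in $L^\iy$. Relatedly, your sketch collapses the post-absorption error into a single sum $\frac1{N^2}\sum_{i,j}\Phi(x^i,x^j)$, whereas the gradient piece actually produces a square $\abs{\frac1N\sum_j(V(x_i,x_j)-T_V\bar\rho(x_i))}^2$, i.e.\ a triple product $\psi(x_1,x_j)\psi(x_1,x_k)$ needing Proposition \ref{JW2018Thm3} (single cancellation), while only the Hessian piece has the form you describe and needs Proposition \ref{JW2018Thm4} (double cancellation). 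None of this changes the architecture of the proof, but as written the diagonal step fails and the exponential-moment step invokes the wrong hypotheses.
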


This theorem is an adaptation of Theorem 1 of \cite{Jabin-Wang2018}, for non-translation invariant systems.

We would now like to estimate the time evolution of the \emph{relative entropy} 
\eq{
	\HN(\rho_N|\bar\rho_N)(t)=\frac1N\int_{\T^{4N}}\!{\rho_N(t,X)\log\frac{\rho_N(t,X)}{\bar\rho_N(t,X)}}{\,\dr X},
} writing $X=(x^1,\ldots,x^N)$, which we calculate in the following lemma.
\begin{lem}\label{JW2018Lem2} Let $\rho_N$ be an entropy solution to the Liouville equation \eqref{Liou}, and let $\bar\omega\in C^1([0,\iy),C^0)\cap C^0([0,\iy),C^2)$ be a classical solution to tensorized Navier Stokes \eqref{TNS} with $\bar\omega>0$ and $\int_\Td\bar\omega=1$, then writing $\bar\rho_N = \bar\omega^{\ot N}$, we have
\eq{
	\HN(\rho_N|\bar\rho_N)(t)&\le\HN(\rho_N^0|\bar\rho_N^0)
	-\frac{1}{N^2}\sum_{i,j=1}^N\int_0^t\!{\int_{\T^{4N}}{\rho_N(K(x_i,x_j)-\bar{\ca K}\bar\rho(x_i))\cdot\na_{x_i}\log\bar\rho_N}{\,\dr X}}{\,\dr s} \\
	&\qquad-\frac\nu N\sum_{i=1}^N\int_0^t\!\int_{\T^{4N}}\!\rho_N\abs{\na_{x_i}\log\frac{\rho_N}{\bar\rho_N}}^2.
}
\end{lem}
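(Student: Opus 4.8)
The plan is to differentiate $\HN(\rho_N\,|\,\bar\rho_N)$ in time and regroup the resulting terms into the advertised drift and dissipation pieces. The first ingredient is the equation satisfied by the reference density $\bar\rho_N=\bar\omega^{\ot N}$. Since each factor solves \eqref{TNS} and the velocity $\bar u=\bar{\ca K}\bar\omega$ is divergence-free, differentiating the product gives the linear Liouville equation
\begin{equation*}
	\p_t\bar\rho_N + \sum_{i=1}^N\bar{\ca K}\bar\omega(x^i)\cdot\na_{x^i}\bar\rho_N = \nu\sum_{i=1}^N\De_{x^i}\bar\rho_N,
\end{equation*}
that is, the same transport--diffusion structure as \eqref{Liou} but with the mean-field drift $\bar b_i:=\bar{\ca K}\bar\omega(x^i)$ replacing the empirical drift $b_i:=\tfrac1N\sum_j\bar K(x^i,x^j)$. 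By incompressibility of $K$ and of $\bar{\ca K}$, both $b_i$ and $\bar b_i$ are divergence-free in $x^i$, which is what makes all the integrations by parts below clean.

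Next I would write, using conservation of mass $\tfrac{d}{dt}\int\rho_N=0$,
\begin{equation*}
	\frac{d}{dt}\HN(\rho_N\,|\,\bar\rho_N) = \frac1N\int_{\T^{4N}}\!\p_t\rho_N\,\log\frac{\rho_N}{\bar\rho_N}\,\dr X - \frac1N\int_{\T^{4N}}\!\rho_N\,\p_t\log\bar\rho_N\,\dr X,
\end{equation*}
and substitute both equations. For the transport contributions I integrate by parts and discard each $\na_{x^i}\cdot b_i$ and $\na_{x^i}\cdot\bar b_i$; then splitting $\na_{x^i}\log\tfrac{\rho_N}{\bar\rho_N}=\na_{x^i}\log\rho_N-\na_{x^i}\log\bar\rho_N$ and integrating the $\na_{x^i}\log\rho_N$ piece by parts once more (again using divergence-freeness) collapses them to
\begin{equation*}
	-\frac1N\sum_{i=1}^N\int_{\T^{4N}}\!\rho_N(b_i-\bar b_i)\cdot\na_{x^i}\log\bar\rho_N\,\dr X = -\frac1{N^2}\sum_{i,j=1}^N\int_{\T^{4N}}\!\rho_N\pa{\bar K(x^i,x^j)-\bar{\ca K}\bar\omega(x^i)}\cdot\na_{x^i}\log\bar\rho_N\,\dr X,
\end{equation*}
which is exactly the stated drift term.

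The diffusion terms are the delicate bookkeeping. Integrating the Laplacian in $\p_t\rho_N$ by parts produces the Fisher information $-\tfrac\nu N\sum_i\int\rho_N\abs{\na_{x^i}\log\tfrac{\rho_N}{\bar\rho_N}}^2$ together with a cross term $-\tfrac\nu N\sum_i\int\rho_N\na_{x^i}\log\bar\rho_N\cdot\na_{x^i}\log\tfrac{\rho_N}{\bar\rho_N}$, while the $\p_t\log\bar\rho_N$ term contributes $-\tfrac\nu N\sum_i\int\rho_N\tfrac{\De_{x^i}\bar\rho_N}{\bar\rho_N}$. Using the pointwise identity $\tfrac{\De_{x^i}\bar\rho_N}{\bar\rho_N}=\De_{x^i}\log\bar\rho_N+\abs{\na_{x^i}\log\bar\rho_N}^2$ and one further integration by parts in the cross term (rewriting $\rho_N\na_{x^i}\log\tfrac{\rho_N}{\bar\rho_N}=\na_{x^i}\rho_N-\rho_N\na_{x^i}\log\bar\rho_N$), every contribution except the Fisher information cancels exactly. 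Assembling the drift and diffusion pieces and integrating in time yields the identity underlying the lemma.

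The main obstacle is not the algebra but its justification at the regularity of an entropy solution, where the time differentiation, the integrations by parts, and the very appearance of the Fisher information are not a priori licit. I would run the computation first for smooth approximations (regularizing the initial data and/or the kernel), for which every step holds with equality; here $\log\bar\rho_N$ is smooth with two bounded derivatives because $\bar\omega>0$ is a classical solution bounded below, so the drift term survives the limit thanks to the weak control of $\bar K\rho_{N,2}$ supplied by Proposition \ref{entest}. Passing to the limit, the relative entropy and the Fisher-information dissipation are only lower semicontinuous by convexity, so the limiting equality degrades to the asserted inequality ``$\le$''. This is the same mechanism that renders the entropy balance in Proposition \ref{entest} an inequality, and it is the one place where the entropy-solution framework, rather than a classical solution, is genuinely needed.
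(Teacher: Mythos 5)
Your algebraic skeleton is exactly the paper's: the tensorized reference density $\bar\rho_N=\bar\omega^{\otimes N}$ satisfies the linear Liouville equation with mean-field drift, the transport terms collapse (after integrations by parts using incompressibility) to $-\tfrac1{N^2}\sum_{i,j}\int\rho_N(\bar K(x^i,x^j)-\bar{\ca K}\bar\omega(x^i))\cdot\na_{x^i}\log\bar\rho_N$, and the diffusion cross terms cancel against $\nu\rho_N\De_{x^i}\bar\rho_N/\bar\rho_N$ to leave only the relative Fisher information; I checked your cancellation bookkeeping and it is correct. Where you diverge from the paper is in how the computation is justified at entropy-solution regularity. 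The paper does not differentiate $\HN$ in time and does not regularize inside this lemma: it splits the relative entropy as $\tfrac1N\int\rho_N\log\rho_N-\tfrac1N\int\rho_N\log\bar\rho_N$, computes the second term \emph{exactly} by using $\log\bar\rho_N\in C^1$ (legitimate since $\bar\omega>0$ is classical, hence $\bar\rho_N$ is bounded below) as a test function in the distributional formulation of \eqref{Liou}, and controls the first term by the entropy dissipation inequality that is part of the \emph{definition} of an entropy solution in Proposition \ref{entest}. This buys two things your route does not automatically give.

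First, your regularize-and-pass-to-the-limit scheme must show that the drift term survives the limit, i.e.\ that $\bar K_\ve\rho_{\ve,N,2}\rightharpoonup\bar K\rho_{N,2}$; since $\bar K$ is singular this is not a soft consequence of weak convergence of $\rho_\ve$ and needs the Fisher-information bound to control mass near the diagonal --- you gesture at Proposition \ref{entest} here, but this step is the real work and you have not carried it out. Second, and more structurally, an argument that runs the computation on approximations and takes limits establishes the inequality only for the particular entropy solution obtained as that limit, whereas the lemma (and Theorem \ref{mainthm}) is stated for an \emph{arbitrary} entropy solution; the paper's test-function route applies to any such solution directly. Your observation that the inequality sign comes from lower semicontinuity is not how the paper gets it either --- there it comes from the dissipation inequality already built into the notion of entropy solution. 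So: right identity, right cancellations, but you should replace the approximation argument by testing the weak formulation against $\log\bar\rho_N$ and invoking the entropy inequality from Proposition \ref{entest}, which closes both gaps at once.
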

\begin{proof} We directly apply the PDE \eqref{TNS} to see that $\log\bar\rho_N$ must solve
\eq{\label{eq0.10}
	\p_t\log\bar\rho_N&+\sum_{i=1}^N\frac1N\sum_{j=1}^N \bar K(x_i,x_j)\cdot\na_{x_i}\log\bar\rho_N
	=\sum_{i=1}^N\nu\frac{\De_{x_i}\bar\rho_N}{\bar\rho_N} 
	+\sum_{i=1}^N\pa{\frac1N\sum_{j=1}^N \bar K(x_i,x_j)-\bar{\ca K} \bar\rho(x_i)}\cdot\na_{x_i}\log\bar\rho_N.
} Since $\bar\rho_N$ is bounded below, $\log\bar\rho_N\in C^1([0,\iy)\ti\T^{4N})$ is a suitable test function in the Liouville equation \eqref{Liou} for $\rho_N$, so that
\eqq{
	\int_{\T^{4N}}\!{\rho_N\log\bar\rho_N}{\,\dr X}&=\int_{\T^{4N}}\!{\rho_N^0\log\bar\rho_N^0}{\,\dr X}
		+\int_0^t\!{\int_{\T^{4N}}\!{\rho_N\pa{\p_t\log\bar\rho_N
		+\frac1N\sum_{i,j=1}^N\bar K(x_i,x_j)\cdot\na_{x_i}\log\bar\rho_N}}{\,\dr X}}{\,\dr s} \\
	&\qquad-\nu\sum_{i=1}^N\int_0^t\!{\int_{\T^{4N}}\!{\na_{x_i}\log\bar\rho_N\cdot\na_{x_i}\rho_N}{\,\dr X}}{\,\dr s},
} but substituting in \eqref{eq0.10} gives us that
\eqq{
	\int_{\T^{4N}}\!{\rho_N\log\bar\rho_N}{\,\dr X}&=\int_{\T^{4N}}\!{\rho_N^0\log\bar\rho_N^0}{\,\dr X}
	+\sum_{i=1}^N\int_0^t\!{\int_{\T^{4N}}\!{\rho_N\pa{\frac1N\sum_{j=1}^N \bar K(x_i,x_j)-\bar{\ca K} \bar\rho(x_i)}\cdot \na_{x_i}\log\bar\rho_N}{\,\dr X}}{\,\dr s} \\
	&\qquad+\sum_{i=1}^N\int_0^t\!{\int_{\T^{4N}}\!{\pa{\nu\rho_N\frac{\De_{x_i}\bar\rho_N}{\bar\rho_N} - \nu\na_{x_i}\rho_N\cdot\frac{\na_{x_i}\bar\rho_N}{\bar\rho_N}}}{\,\dr X}}{\,\dr s}.
} Using the entropy dissipation we require from $\rho_N$ as an entropy solution, we get the bound
\eq{
	\HN(\rho_N|\bar\rho_N)(t)\le\HN(\rho_N^0|\bar\rho_N^0)
	-\frac{1}{N^2}\sum_{i,j=1}^N\int_0^t\!{\int_{\T^{4N}}\!{\rho_N\bar K(x_i,x_j)\cdot\na_{x_i}\log\bar\rho_N}{\,\dr X}}{\,\dr s}
	+\frac1N D_N
} for 
\eq{
	D_N&=\nu\sum_{i=1}^N\int_0^t\!\int_{\T^{4N}}\!\pa{-\rho_N\frac{\De_{x_i}\bar\rho_N}{\bar\rho_N}
	+\na_{x_i}\rho_N\cdot\frac{\na_{x_i}\bar\rho_N}{\bar\rho_N}-\frac{\abs{\na_{x_i}\rho_N}^2}{\rho_N}}\\
	&=-\nu\sum_{i=1}^N\int_0^t\!\int_{\T^{4N}}\!\pa{\rho_N\frac{\abs{\na_{x_i}\bar\rho_N}^2}{\bar\rho_N^2}
	-2\na_{x_i}\rho_N\cdot\frac{\na_{x_i}\bar\rho_N}{\bar\rho_N}+\frac{\abs{\na_{x_i}\rho_N}^2}{\rho_N}}\\
	&=-\nu\sum_{i=1}^N\int_0^t\!\int_{\T^{4N}}\!\rho_N\abs{\na_{x_i}\log\frac{\rho_N}{\bar\rho_N}}^2,
} which completes the proof.
\end{proof}
Since we would like to complete a Gronwall estmate, we must now analyze the error term in Lemma \ref{JW2018Lem2}. The first task is to find an upper bound involving the relative entropy $\HN$ – for this, we require the following elementary lemma.
\begin{lem}\label{JW2018Lem1} Fix probability densities $\rho_N$ and $\bar\rho_N$ on ${\T^{dN}}$, then for any $\Phi\in L^\iy({\T^{dN}})$, we have for all $\eta>0$ that
\eq{
	\int_{\T^{dN}}\!{\Phi\rho_N}{\,\dr X}\le\frac1\eta\pa{\HN(\rho_N|\bar\rho_N)+\frac1N\int_{\T^{dN}}\!{\bar\rho_N e^{N\eta\Phi}}{\,\dr X}}.
}
\end{lem}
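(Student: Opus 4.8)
The plan is to deduce this from the elementary Young-type inequality that expresses the Legendre duality between the convex function $a\mapsto a\log a - a + 1$ on $[0,\iy)$ and its conjugate $b\mapsto e^b - 1$. Concretely, I would first record the pointwise bound
\eqq{
	ab\le a\log a - a + e^b\qquad\hbox{for all } a\ge0,\ b\in\R,
}
which follows either by computing the Legendre transform directly or, most simply, by fixing $b$ and minimizing the difference of the two sides over $a$ (the minimum is attained at $a=e^b$ and equals $0$). This is the finite, elementary form of the Gibbs variational principle underlying the change-of-measure estimate.

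Next I would apply this inequality pointwise on $\T^{dN}$ with the choices $a = \rho_N/\bar\rho_N$ and $b = N\eta\Phi$. Since $\bar\rho_N = \bar\omega^{\ot N}$ is bounded below away from zero (as $\bar\omega>0$ is continuous on a compact domain), the ratio $\rho_N/\bar\rho_N$ is well-defined, and multiplying through by $\bar\rho_N\ge0$ gives
\eqq{
	N\eta\,\Phi\,\rho_N\le \rho_N\log\frac{\rho_N}{\bar\rho_N}-\rho_N+\bar\rho_N e^{N\eta\Phi}
}
almost everywhere. Because $\Phi\in L^\iy$, the exponential term is dominated by $e^{N\eta\norm{\Phi}_{L^\iy}}\bar\rho_N$ and hence integrable, so every term on the right-hand side is integrable and I may integrate over $\T^{dN}$.

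Integrating and using that both $\rho_N$ and $\bar\rho_N$ are probability densities, the term $-\int_{\T^{dN}}\rho_N\,\dr X = -1$ appears, while $\int_{\T^{dN}}\rho_N\log(\rho_N/\bar\rho_N)\,\dr X = N\,\HN(\rho_N|\bar\rho_N)$ by the definition of the rescaled relative entropy. This yields
\eqq{
	N\eta\int_{\T^{dN}}\!\Phi\rho_N\,\dr X\le N\,\HN(\rho_N|\bar\rho_N)-1+\int_{\T^{dN}}\!\bar\rho_N e^{N\eta\Phi}\,\dr X,
}
and discarding the favorable $-1$ and dividing by $N\eta>0$ produces exactly the claimed inequality. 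I do not expect a substantive obstacle here: the only points requiring care are the integrability of the exponential term, which is handled by the assumption $\Phi\in L^\iy$, and the degenerate cases $\rho_N=0$ or $\HN=+\iy$, in both of which the inequality holds trivially under the standard convention $0\log 0 = 0$.
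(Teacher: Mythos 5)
Your argument is correct and yields the stated inequality in full. It rests on the same entropy--exponential (Legendre) duality as the paper's proof, but is packaged differently: the paper normalizes the tilted density $f=Z^{-1}e^{N\eta\Phi}\bar\rho_N$ with $Z=\int_{\T^{dN}}\bar\rho_N e^{N\eta\Phi}\,\dr X$, uses nonnegativity of the relative entropy of $\rho_N$ with respect to $f$ (the Gibbs variational principle) to get $\int_{\T^{dN}}\Phi\rho_N\,\dr X\le\frac1\eta\HN(\rho_N|\bar\rho_N)+\frac{1}{N\eta}\log Z$, and then discards the logarithm via $\log Z\le Z$; you instead integrate the pointwise Young inequality $ab\le a\log a-a+e^b$ with $a=\rho_N/\bar\rho_N$ and $b=N\eta\Phi$, which effectively inlines the proof of that variational principle, avoids introducing $f$ altogether, and retains an extra favourable $-1/(N\eta)$ that you then drop --- exactly matching the loss the paper incurs from $\log Z\le Z$. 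Your route is marginally more elementary and self-contained, and sidesteps checking $Z<\iy$ before normalizing (immediate from $\Phi\in L^\iy$ in either case). One small point: you justify the ratio $\rho_N/\bar\rho_N$ by appealing to $\bar\rho_N=\bar\omega^{\ot N}>0$, but the lemma is stated for arbitrary probability densities $\bar\rho_N$ with no lower bound assumed; this is harmless, since on the set where $\bar\rho_N=0$ either $\rho_N=0$ as well (and the convention $0\log0=0$ covers it) or $\HN(\rho_N|\bar\rho_N)=+\iy$ and the bound is vacuous, as you note at the end, but the appeal to a lower bound on $\bar\rho_N$ should be removed from a final write-up since it is not a hypothesis.
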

\begin{proof} By rescaling $\Phi$, we may take $\eta=1$. Defining the probability density $$f=\frac1Z e^{N\Phi}\bar\rho_N,\quad Z = \int_{\T^{dN}}\!{\bar\rho_N e^{N\Phi}}{\,\dr X},$$ we may use convexity of the entropy to get that
\eqq{
	\frac1N\int_{\T^{dN}}\!{\rho_N\log f}{\,\dr X}=\int_{\T^{dN}}\!{\rho_N\log\bar\rho_N}{\,\dr X}
} which, by unraveling $\log f$, immediately gives us the result.
\end{proof}

We have now reduced the problem to a form which can be solved by methods from \cite{Jabin-Wang2018}. More precisely, we use the following probabilistic, large deviation type results from \cite{Jabin-Wang2018}. Both of these results rely on expanding the exponential as a power series, and use cancellations to obtain precise combinatorial estimates on the number of nonzero terms in the  expansion.

\begin{prop}[Theorem 3 of \cite{Jabin-Wang2018}]\label{JW2018Thm3} Suppose we have a density $\bar\rho\in L^1(\Td)$ with $\bar\rho\ge0$ and $\int_\Td\bar\rho=1$. Given a scalar function $\psi\in L^\iy$ with $\norm\psi_{L^\iy}\le1/2e$ and such that for any $z$, $\int_\Td\psi(z,x)\bar\rho(x)\,\dr x=0$, we have 
\eq{
	\int_{\T^{dN}}\!{\bar\rho_N\,\exp\pa{\frac1N\sum_{i,j=1}^N\psi(x_1,x_i)\psi(x_1,x_j)}}{\,\dr X}\le 
	C=2\pa{1+\frac{10\alpha}{(1-\alpha)^3}+\frac{\beta}{1-\beta}}
} where $\bar\rho_N=\bar\rho^{\ot N}$, and $$\alpha=(e\norm{\psi}_{L^\iy})^4<1,\quad\beta=(\sqrt{2e}\norm\psi_{L^\iy})^4<1.$$
\end{prop}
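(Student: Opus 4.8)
The plan is to expand the exponential as a power series and reduce the bound to a purely combinatorial count of nonvanishing terms. Writing $\Phi = \frac1N\sum_{i,j=1}^N \psi(x_1,x_i)\psi(x_1,x_j)$, I would begin by expanding
\eqq{
	\int_{\T^{dN}}\!{\bar\rho_N\,e^{\Phi}}{\,\dr X}
	=\sum_{k=0}^\iy\frac{1}{k!\,N^k}\sum_{\substack{i_1,j_1,\ldots,i_k,j_k}}
	\int_{\T^{dN}}\!{\bar\rho_N\prod_{\ell=1}^k\psi(x_1,x_{i_\ell})\psi(x_1,x_{j_\ell})}{\,\dr X}.
}
The crucial structural feature is the mean-zero hypothesis $\int_\Td\psi(z,x)\bar\rho(x)\,\dr x=0$. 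Since $\bar\rho_N$ is a tensor product, the integral in each term factorizes over the distinct indices appearing among $\{i_1,j_1,\ldots,i_k,j_k\}\cup\{1\}$. If any index $m\neq 1$ appears exactly once in a given monomial, then integrating $\bar\rho$ against that single factor $\psi(x_1,x_m)$ in the variable $x_m$ gives zero by the mean-zero assumption. Hence only those monomials survive in which every index other than $1$ occurs with multiplicity at least two.

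The heart of the argument is therefore to count, for each $k$, the number of index tuples $(i_1,j_1,\ldots,i_k,j_k)\in\{1,\ldots,N\}^{2k}$ that give a nonzero contribution, i.e.\ those in which each value in $\{2,\ldots,N\}$ is hit an even number of — or at least two — times, and then to bound each surviving integral by $\norm\psi_{L^\iy}^{2k}$. The key combinatorial point is that if $p$ distinct indices from $\{2,\ldots,N\}$ appear, then $p\le k$ (since there are $2k$ slots and each such index uses up at least two), which produces a factor of roughly $N^p\le N^k$ that cancels the $N^{-k}$ prefactor, leaving a convergent series in $\norm\psi_{L^\iy}$. I would organize the count by the number of indices equal to $1$ versus those in $\{2,\ldots,N\}$, and track how the $x_1$-dependence couples the factors; the two constants $\alpha=(e\norm\psi_{L^\iy})^4$ and $\beta=(\sqrt{2e}\norm\psi_{L^\iy})^4$ will emerge from two distinct families of pairings, one carrying the $10\alpha/(1-\alpha)^3$ weight and one the $\beta/(1-\beta)$ weight, summed as geometric series, which converge precisely because $\norm\psi_{L^\iy}\le 1/2e$ forces $\alpha,\beta<1$.

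The main obstacle I anticipate is the precise combinatorial bookkeeping that yields the explicit constant $2(1+10\alpha/(1-\alpha)^3+\beta/(1-\beta))$ rather than merely a finite bound. One must carefully separate the terms according to the pairing structure of the repeated indices — distinguishing, for instance, pairs of the form $\psi(x_1,x_m)^2$ from longer cycles linking several $x_m$'s through the shared variable $x_1$ — and estimate the number of each type using Stirling-type bounds on the multinomial coefficients, matching them against the $1/k!$ and $1/N^k$ factors. Getting the powers of $e$ and $\sqrt{2e}$ to line up correctly with the fourth powers in $\alpha$ and $\beta$ is delicate, but since this is Theorem 3 of \cite{Jabin-Wang2018}, I would follow their expansion and cancellation scheme directly, adapting only the notation to the tensorized setting.
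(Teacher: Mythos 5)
The paper does not actually prove this proposition; it is imported verbatim as Theorem 3 of \cite{Jabin-Wang2018}, accompanied only by the one-line remark that the proof expands the exponential as a power series and uses cancellations to obtain combinatorial estimates on the number of nonzero terms. Your outline is exactly that strategy --- series expansion, factorization of $\bar\rho_N$ conditional on $x_1$, vanishing of any monomial in which some index in $\{2,\dots,N\}$ appears exactly once (note the correct surviving condition is multiplicity at least two, not even multiplicity, since the only cancellation available is the single-factor mean-zero property in the second variable), and geometric resummation --- so it matches the cited proof in approach, with the explicit constant being the only part you defer, whereas the paper defers the entire argument.
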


\begin{prop}[Theorem 4 of {[JW2018]}]\label{JW2018Thm4} Let $\bar\rho\in L^1(\Td)$ be a density. Consider any $\phi(x,z)\in L^\iy$ with $$\ga:=C\norm{\phi}_{L^\iy}^2<1$$ for some universal constant $C$. Then if $\phi$ satisfies the cancellation conditions
\eq{
	\int_\Td\!{\phi(x,z)\bar\rho(x)}{\,\dr x}=0\quad\forall z,\qquad\int_\Td\!{\phi(x,z)\bar\rho(z)}{\,\dr z}=0\quad\forall x,
} then we have
\eq{
	\int_{\T^{dN}}\!{\bar\rho_N\exp\pa{\frac1N\sum_{i,j=1}^N\phi(x_i,x_j)}}{\,\dr X}\le\frac{3}{1-\ga}<\iy.
}
\end{prop}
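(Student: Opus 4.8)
The plan is to estimate the integral by Taylor expanding the exponential and controlling each term of the resulting series by means of the two cancellation hypotheses on $\phi$. Writing $S=\frac1N\sum_{i,j=1}^N\phi(x_i,x_j)$, I would first expand
\[
	\int_{\T^{dN}}\!\bar\rho_N\,e^{S}\,\dr X=\sum_{k=0}^\iy\frac{1}{k!}\int_{\T^{dN}}\!\bar\rho_N\,S^k\,\dr X,
\]
and then expand each power as a sum over index tuples,
\[
	\int_{\T^{dN}}\!\bar\rho_N\,S^k\,\dr X=\frac{1}{N^k}\sum_{(i_1,j_1),\ldots,(i_k,j_k)}\int_{\T^{dN}}\!\bar\rho_N\prod_{\ell=1}^k\phi(x_{i_\ell},x_{j_\ell})\,\dr X,
\]
the sum running over all $k$-tuples of pairs in $\{1,\ldots,N\}^2$, of which there are $N^{2k}$. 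The whole point is that after integration against $\bar\rho_N=\bar\rho^{\ot N}$ only about $N^{k}$ of these survive, so that the prefactor $N^{-k}$ yields an $O(1)$ bound.

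To see which terms survive, I associate to each tuple the multigraph on vertex set $\{1,\ldots,N\}$ whose $k$ edges are the pairs $(i_\ell,j_\ell)$. Suppose a vertex $m$ has degree one, i.e. $x_m$ occurs in exactly one factor, say $\phi(x_m,x_j)$ with $j\ne m$. Integrating over $x_m$ first by Fubini, the inner integral is $\int_\Td\phi(x_m,x_j)\bar\rho(x_m)\,\dr x_m=0$ by the first cancellation condition (and symmetrically by the second if $x_m$ occurs as the second argument); hence the entire term vanishes. Consequently only tuples whose multigraph has \emph{minimum degree at least two} contribute. For such a tuple with $p$ distinct active vertices, the degree sum equals $2k$ while every active vertex has degree at least $2$, so $2p\le 2k$, i.e. $p\le k$. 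Bounding each surviving integral crudely by $\norm{\phi}_{L^\iy}^k$, using that $\bar\rho_N$ is a probability density, each contributes at most $\norm{\phi}_{L^\iy}^k$ in absolute value.

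It then remains to count the surviving tuples, which I would organize by their number $p\le k$ of active vertices: the active set is chosen in $\binom{N}{p}$ ways, and the remaining combinatorial data — which ordered edge joins which active vertices, subject to the minimum-degree-two constraint — is a count $T(k,p)$ independent of $N$. The estimate $\binom{N}{p}\le N^p/p!$ together with $p\le k$ produces the crucial absorption of the $N$-dependence,
\[
	\frac{1}{N^k}\binom{N}{p}\le\frac{N^{p-k}}{p!}\le\frac{1}{p!},
\]
so that the full series is dominated by $\sum_{k\ge0}\frac{\norm{\phi}_{L^\iy}^k}{k!}\sum_{p\le k}\frac{T(k,p)}{p!}$ with no surviving powers of $N$.

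The main obstacle is precisely the combinatorial estimate on $T(k,p)$: one must show that the number of minimum-degree-two edge configurations is small enough that, after division by $k!$ and $p!$, this double sum converges and in fact collapses to the stated geometric bound $\frac{3}{1-\ga}$ with $\ga=C\norm{\phi}_{L^\iy}^2$. The reason the natural small parameter is $\norm{\phi}_{L^\iy}^2$ rather than $\norm{\phi}_{L^\iy}$ is that the $k=1$ single-edge terms are annihilated by the cancellation, so the lowest surviving order is $k=2$, contributing $\tfrac12\norm{\phi}_{L^\iy}^2$; more generally the minimum-degree-two requirement forces every active vertex to be ``paired,'' so each factor of $\norm{\phi}_{L^\iy}$ effectively arrives with a partner. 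Extracting this scaling, along with the universal constant $C$ and the explicit numerator $3$, is where the careful bookkeeping of \cite{Jabin-Wang2018} — expand, discard the leaves, and sum the surviving graph count — must be carried out in full.
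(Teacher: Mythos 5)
The first thing to say is that the paper contains no proof of this statement: it is imported verbatim as Theorem 4 of \cite{Jabin-Wang2018}, with only the remark that such results ``rely on expanding the exponential as a power series, and use cancellations to obtain precise combinatorial estimates on the number of nonzero terms.'' Your outline reproduces exactly that strategy --- expand $e^S$, expand each $S^k$ over $k$-tuples of index pairs, observe that the two cancellation hypotheses annihilate any term in whose multigraph some variable occurs in exactly one factor and in only one slot of that factor, so that only configurations with $p\le k$ active indices survive and $N^{-k}\binom{N}{p}\le 1/p!$ absorbs the $N$-dependence. That is the correct first half of the Jabin--Wang argument, and as a plan it is the right one.

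There are, however, two genuine gaps. First, the combinatorial estimate you defer is not bookkeeping but essentially the entire content of the theorem, and the proposal does not carry it out: asserting that the double sum ``collapses to the stated geometric bound $\frac{3}{1-\ga}$'' is a statement of the theorem, not a proof of it. (For what it is worth, the crude bounds $T(k,p)\le p^{2k}$ and $\binom{N}{p}\le N^p/p!$ can be made to yield a geometric majorant for $\norm{\phi}_{L^\iy}$ below an explicit universal threshold, but only after one verifies --- uniformly over all regimes of $p$ relative to $k$ and to $N$ --- that the competing factors $N^{p-k}$, $p^{2k}$, $1/k!$ and $1/p!$ conspire so that the $k\log k$ growth rates cancel; that verification, together with the extraction of the quadratic parameter $\ga=C\norm{\phi}_{L^\iy}^2$ and the numerator $3$, is precisely what is missing and is where \cite{Jabin-Wang2018} spend their effort.) Second, your claim that ``the lowest surviving order is $k=2$'' is false: the diagonal factors $\phi(x_i,x_i)$ are killed by neither cancellation condition, since both conditions integrate one argument with the other held fixed rather than along the diagonal. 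Hence $\frac1N\sum_i\phi(x_i,x_i)$ survives at every order including $k=1$, must be split off and bounded separately by $\norm{\phi}_{L^\iy}$, and shows that odd powers of $\norm{\phi}_{L^\iy}$ genuinely occur --- so the heuristic ``every factor arrives with a partner'' explanation of the $\norm{\phi}_{L^\iy}^2$ scaling does not hold as stated (although a smallness condition on $\norm{\phi}_{L^\iy}$ and one on $\norm{\phi}_{L^\iy}^2$ are of course interchangeable up to redefining the universal constant, which is all the application in Lemma \ref{JW2018Lem3} requires).
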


We can now make use of these probabilistic results to get the following estimate on the mean-field correction.

\begin{lem}\label{JW2018Lem3} Assume $\bar\rho\in C^2(\T^4)$ with $\inf\bar\rho>0$, then we have the bound
\eq{
	-\frac{1}{N^2}\sum_{i,j=1}^N\int_{\T^{4N}}\!{\rho_N(\bar K(x_i,x_j)-\bar{\ca K}\bar\rho(x_i))\cdot\na_{x_i}\log\bar\rho_N}{\,\dr X}
	&\le\frac{\nu}{4N}\sum_{i=1}^N\int_{\T^{4N}}\!{\rho_N\abs{\na_{x_i}\log\frac{\rho_N}{\bar\rho_N}}^2}{\,\dr X} \\ &\quad+ C\pa{\HN(\rho_N|\bar\rho_N)+\frac1N}
} for $C=C(\nu,\alpha_\pm,\norm{\bar\rho}_{C^2},\inf\bar\rho)$.
\end{lem}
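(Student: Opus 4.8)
The plan is to split the error term into a "diagonal" piece ($i=j$) and an "off-diagonal" piece ($i\ne j$), and then to apply the large-deviation machinery of Propositions \ref{JW2018Thm3} and \ref{JW2018Thm4} to the latter. I would begin by introducing the scalar test field
\eqq{
	\psi(x,y) = \pa{\bar K(x,y) - \bar{\ca K}\bar\rho(x)}\cdot\na_{x}\log\bar\rho(x),
}
so that the summand becomes $\psi(x_i,x_j)$. The crucial observation is the cancellation built into the mean-field correction: since $\bar{\ca K}\bar\rho(x) = \int \bar K(x,y)\bar\rho(y)\,\dr y$, we have $\int_{\T^4}\psi(x,y)\bar\rho(y)\,\dr y = 0$ for every fixed $x$, which is exactly the hypothesis required to invoke the large-deviation estimate. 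Under the standing assumptions $\bar\rho\in C^2$ with $\inf\bar\rho>0$, the gradient $\na_x\log\bar\rho = \na_x\bar\rho/\bar\rho$ is bounded, and since $\bar{\ca K}$ maps $C^0\to C^0$ boundedly, $\psi\in L^\iy$ with $\norm\psi_{L^\iy}\le C(\alpha_\pm,\norm{\bar\rho}_{C^2},\inf\bar\rho)$.

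Next I would apply Lemma \ref{JW2018Lem1} with $\Phi = -\frac1N\sum_{i,j}\psi(x_i,x_j)$ and a carefully chosen $\eta$, converting the linear-in-$\rho_N$ error into the sum of a relative-entropy term $\frac1\eta\HN(\rho_N|\bar\rho_N)$ and an exponential moment $\frac{1}{N\eta}\log\int\bar\rho_N\exp(N\eta\Phi)\,\dr X$. To control the exponential moment I would use Proposition \ref{JW2018Thm4} on the genuinely off-diagonal, doubly-centered part of $\psi$, after first subtracting its $x$-average to enforce both cancellation conditions; the residual singular self-interaction terms ($i=j$) contribute only an $O(1/N)$ error since there are $N$ such terms each bounded by $\norm\psi_{L^\iy}$ in a sum normalized by $N^2$. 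The smallness condition $\ga = C\norm\psi_{L^\iy}^2<1$ of Proposition \ref{JW2018Thm4} need not hold outright, so I would rescale by absorbing a factor of $\eta$ small enough that $\eta\psi$ satisfies the hypothesis; this is precisely what frees up the $\frac{\nu}{4N}$ coefficient on the Fisher-information term, which enters because the non-centered part of $\psi$ (the piece depending only on $x_i$) must be handled by a Cauchy--Schwarz/Young inequality against $\na_{x_i}\log(\rho_N/\bar\rho_N)$ rather than by the exponential estimate.

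The main obstacle I anticipate is the decomposition of $\psi$ into the part amenable to the exponential large-deviation bound and the part that must be absorbed into the Fisher information. Unlike the translation-invariant setting of \cite{Jabin-Wang2018}, here $\bar\rho$ is not a fixed reference but a genuinely $x$-dependent classical solution, so $\psi(x,y)$ satisfies the cancellation $\int\psi(x,y)\bar\rho(y)\,\dr y=0$ in $y$ but \emph{not} the symmetric cancellation $\int\psi(x,y)\bar\rho(x)\,\dr x=0$ in $x$. I would therefore write $\psi = \psi_c + \psi_r$, where $\psi_c(x,y) = \psi(x,y) - \int\psi(x',y)\bar\rho(x')\,\dr x'$ is doubly centered and feeds into Proposition \ref{JW2018Thm4}, while the remainder $\psi_r(y) = \int\psi(x',y)\bar\rho(x')\,\dr x'$ depends only on $y$. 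The term $\frac1{N^2}\sum_{i,j}\psi_r(x_j) = \frac1N\sum_j\psi_r(x_j)$ is then estimated directly against $\na_{x_j}\log(\rho_N/\bar\rho_N)$ using the identity $\int\rho_N\,\na_{x_j}\log(\rho_N/\bar\rho_N)\,\dr X = \int\na_{x_j}\rho_N - \rho_N\na_{x_j}\log\bar\rho_N\,\dr X$ and Young's inequality, producing exactly the $\frac{\nu}{4N}$ Fisher-information term plus a $C(\HN + \frac1N)$ remainder. Verifying that the constants combine to give the stated dependence $C=C(\nu,\alpha_\pm,\norm{\bar\rho}_{C^2},\inf\bar\rho)$ is routine once the decomposition is in place.
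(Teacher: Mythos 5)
There is a genuine gap at the very first step: you assert that $\psi(x,y) = (\bar K(x,y)-\bar{\ca K}\bar\rho(x))\cdot\na_x\log\bar\rho(x)$ lies in $L^\iy$. It does not. The Biot--Savart kernel behaves like $\frac{1}{2\pi}\frac{z^\perp}{\abs z^2}$ near $z=0$, so $\bar K(x,y)$ blows up like $\abs{x_+-y_+}^{-1}$ (and similarly for the other three differences); multiplying by the bounded factor $\na_x\log\bar\rho$ does not remove this. (The \emph{convolved} quantity $\bar{\ca K}\bar\rho$ is indeed bounded, but the pointwise kernel $\bar K(x,y)$ is not --- your appeal to boundedness of the operator $\bar{\ca K}:C^0\to C^0$ conflates the two.) Since Lemma \ref{JW2018Lem1} requires $\Phi\in L^\iy$ and Propositions \ref{JW2018Thm3} and \ref{JW2018Thm4} require $\psi,\phi\in L^\iy$ with \emph{small} sup norm, none of the machinery you invoke applies to your $\psi$, and no choice of $\eta$ can rescale an infinite sup norm below the threshold. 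This singularity is the entire difficulty of the lemma; if $\bar K$ were bounded the argument would be essentially the easy case of \cite{Jabin-Wang2018}.

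The paper's proof circumvents this by writing $K=\na\cdot V_0$ for a \emph{bounded} matrix field $V_0$ (explicitly $V_0(x)=\frac{1}{2\pi}\frac{x\ot x^\perp}{\abs x^2}+\Gamma(x)$), tensorizing to get $\bar K(x,x')=\na_x\cdot V(x,x')$ with $V\in L^\iy$, and integrating by parts in $x_i$. The derivative then lands either on $\rho_N/\bar\rho_N$ --- producing the term that is absorbed into the $\frac{\nu}{4N}$ Fisher information by Cauchy--Schwarz, with the leftover \emph{quadratic} expression $\abs{\frac1N\sum_j V(x_i,x_j)-T_V\bar\rho(x_i)}^2$ controlled by Proposition \ref{JW2018Thm3} --- or on $\na_{x_i}\bar\rho_N$, producing a term with the bounded coefficient $(V-T_V\bar\rho):\na^2\bar\rho/\bar\rho$ to which Proposition \ref{JW2018Thm4} applies (the $x$-cancellation there comes for free from $\na\cdot\bar K=0$ after integration by parts, not from subtracting an $x$-average as you propose). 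So your instincts about where the Fisher information and the two large-deviation propositions enter are partially right, but the reason the Fisher term appears is the integration by parts against $V$, not the handling of a non-centered remainder; without the $K=\na\cdot V$ representation the argument cannot start.
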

\begin{proof} We now use the fact that the Biot-Savart kernel $K$ on $\T^2$ can be represented as $K(x) = \na_x\cdot V_0(x)$, where $V_0\in L^\iy(\T^2)$ is a matrix-valued function \cite{Jabin-Wang2018}. In fact, for the periodic Biot-Savart kernel, this matrix can be made semi-explicit. We note that on $\R^2$,
\eq{
	\p_i\pa{\frac{1}{2\pi}\frac{x_ix_j}{\abs x^2}} = \frac{1}{2\pi}\frac{x_j}{\abs{x}^2},
} which allows us to define
\eq{
	V_0(x) = \frac{1}{2\pi}\frac{x\ot x^\perp}{\abs x^2} + \Gamma(x),\qquad x\in [-1/2-\ve,1/2+\ve]
} for a matrix-valued smooth correction $\Gamma\in C^\iy$. We similarly define
\eq{
	V(x,x') = \mat{\alpha_+ V_0(x_+-x_+')-\alpha_- V_0(x_+-x_-') & 0 \\ 0 & 
				\alpha_+ V_0(x_--x_+')-\alpha_- V_0(x_--x_-') }
} where the block matrices represent $x_+$ and $x_-$ coordinates respectively. We additionally define
\eq{
	\upsilon(x) = \mat{-\alpha_- V_0(x_+-x_-) & 0 \\ 0 & \alpha_+ V_0(x_--x_+)}.
} We see easily that $V\in L^\iy(\T^4\ti\T^4)$, and $\upsilon\in L^\iy(\T^4)$, and that $\na_x\cdot V(x,x') = \bar K(x,x')$ and $\na_x\cdot\upsilon(x) = K(x,x)$. We can then rewrite
\eqq{
&-\frac{1}{N^2}\sum_{i,j=1}^N\int_{\T^{4N}}{\rho_N(\bar K(x_i,x_j)-\bar{\ca K}\bar\rho(x_i))\cdot\na_{x_i}\log\bar\rho_N}{\,\dr X} \\
&\qquad= -\frac{1}{N^2}\sum_{\alpha,\beta}\sum_{i=1}^N\int_{\T^{4N}}{\p_{x^i_\beta}\pa{
	\sum_{j\ne i}V_{\alpha\beta}(x_i,x_j)+\upsilon(x_i)-N\,T_{V_{\alpha \beta}}\bar\rho(x_i)}\frac{\rho_N}{\bar\rho_N}\p_{x^i_\alpha}\bar\rho_N}{\,\dr X}
} where $T_{V_{\alpha\beta}}$ is the integral operator obtained by integration in the second variable. We can then integrate by parts, and get that this is equal to
\eq{
	&=\frac{1}{N^2}\sum_{\alpha,\beta}\sum_{i=1}^N\int_{\T^{4N}}\!{\pa{\sum_{j\ne i}V_{\alpha\beta}(x_i,x_j)+\upsilon(x_i)-NT_{V_{\alpha \beta}}\bar\rho(x_i)}\frac{\rho_N}{\bar\rho_N}\p^2_{x^i_\alpha x^i_\beta}\bar\rho_N}{\,\dr X} \\
	&\qquad+\frac{1}{N^2}\sum_{\alpha,\beta}\sum_{i=1}^N\int_{\T^{4N}}\!{\pa{\sum_{j\ne i}V_{\alpha\beta}(x_i,x_j)+\upsilon(x_i)-NT_{V_{\alpha \beta}}\bar\rho(x_i)}\p_{x^i_\beta}\frac{\rho_N}{\bar\rho_N}\p_{x^i_\alpha}\bar\rho_N}{\,\dr X} \\
	&=\frac{1}{N^2}\sum_{i,j=1}^N\int_{\T^{4N}}\!{(V(x_i,x_j)-T_V\bar\rho(x_i)):\na_{x_i}\bar\rho_N\ot\na_{x_i}
		\frac{\rho_N}{\bar\rho_N}}{\,\dr X} \\
	&\qquad + \frac{1}{N^2}\sum_{i,j=1}^N\int_{\T^{4N}}\!{\rho_N\,(V(x_i,x_j)-T_V\bar\rho(x_i)):
		\frac{\na_{x_i}^2\bar\rho_N}{\bar\rho_N}}{\,\dr X} =: A+B
} in tensor form, where we use the convention that $V(x,x):=\upsilon(x)$. We now independently treat the quantities $A$ and $B$. To bound $A$, we first use Cauchy Schwartz and Young's inequality to get that
\eq{\label{firstAbound}
	A &\le \frac{\nu}{4N}\sum_{i=1}^N\int_{\T^{4N}}\!{\frac{\bar\rho_N^2}{\rho_N}\abs{\na_{x_i}\frac{\rho_N}{\bar\rho_N}}^2}{\,\dr X}
	+\frac{4}{N\nu}\sum_{i=1}^N\int_{\T^{4N}}\!{\abs{\frac1N\sum_{j=1}^N V(x_i,x_j)-T_V\bar\rho(x_i)}^2
	\abs{\frac{\na_{x_i}\bar\rho_N}{\bar\rho_N}}^2\rho_N}{\,\dr X} \\
	&\le \frac{\nu}{4N}\sum_{i=1}^N\int_{\T^{4N}}\!{\rho_N\abs{\na_{x_i}\log\frac{\rho_N}{\bar\rho_N}}^2}{\,\dr X}
	+\frac{4\norm{\bar\rho}_{C^1}^2}{N\nu(\inf\bar\rho)^2}\sum_{i=1}^N
		\int_{\T^{4N}}\!{\abs{\frac1N\sum_{j=1}^NV(x_i,x_j)-T_V\bar\rho(x_i)}^2\rho_N}{\,\dr X}
} where we take the Hilbert-Schmidt (i.e. entrywise $\ell^2$) norm of the matrix-valued terms. We now let $\eta>0$ vary and apply Lemma \ref{JW2018Lem1} to 
\eqq{
	\Phi = \eta^2\pa{\frac1N\sum_{j=1}^NV_{\alpha\beta}(x_i,x_j)-T_{V_{\alpha\beta}}\bar\rho(x_i)}^2
} and get
\eq{\label{secondAbound}
&\frac1N\sum_{i=1}^N\int_{\T^{4N}}\!{\abs{\frac1N\sum_{j=1}^NV(x_i,x_j)-T_V\bar\rho(x_i)}^2\rho_N}{\,\dr X}\\
&\qquad\le\frac{4^2}{\eta^2}\HN(\rho_N|\bar\rho_N)
 + \frac{1}{N^2\eta^2}\sum_{i=1}^N\sum_{\alpha,\beta=1}^4
\log\int_{\T^{4N}}\!{\bar\rho_N e^{N\eta^2(\frac1N\sum_jV_{\alpha\beta}(x_i,x_j)-T_{V_{\alpha\beta}}\bar\rho(x_i))^2}}{\,\dr X}.
} But by symmetry, we get that
\eq{
	&\frac1N\sum_{i=1}^N\log\int_{\T^{4N}}\!{\bar\rho_N e^{N\eta^2(\frac1N\sum_jV_{\alpha\beta}(x_i,x_j)-T_{V_{\alpha\beta}}\bar\rho(x_i))^2}}{\,\dr X} \\
	 &\qquad=\log\int_{\T^{4N}}\!{\bar\rho_N e^{N\eta^2(\frac1N\sum_jV_{\alpha\beta}(x_1,x_j)-T_{V_{\alpha\beta}}\bar\rho(x_1))^2}}{\,\dr X}.
} We would now like to apply Proposition \ref{JW2018Thm3}. To do so, we define $\psi(z,x)=\eta V_{\alpha\beta}(z,x)-\eta T_{V_{\alpha\beta}}\bar\rho(z)$ and let $\eta=1/(4e\norm V_{L^\iy})$, noting that $\norm\psi_{L^\iy}<1/4e$ and that by definition of $T_{V_{\alpha\beta}}$, we have cancellation $\int\bar\rho(x)\psi(z,x)\,\dr x=0$ for all $z$. But since
\eqq{
	N\eta^2\pa{\frac1N\sum_{j=1}^N V_{\alpha\beta}(x_1-x_j)-T_{V_{\alpha\beta}}\bar\rho(x_1)}^2
	=\frac1N\sum_{j_1,j_2=1}^N\psi(x_1,x_{j_1})\psi(x_1,x_{j_2}),
} we may apply Proposition \ref{JW2018Thm3} to get that
\eqq{
	\int_{\T^{4N}}\!{\bar\rho_Ne^{N\eta^2(\frac1N\sum_jV_{\alpha\beta}(x_i,x_j)-T_{V_{\alpha\beta}}\bar\rho(x_i))^2}}{\,\dr X}\le C
} for an absolute constant $C$. But combining this bound with the bounds from \eqref{firstAbound} and \eqref{secondAbound} gives us
\eq{\label{finalAbound}
	A\le\frac{\nu}{4N}\sum_{i=1}^N\int_{\T^{4N}}\!{\rho_N\abs{\na_{x_i}\log\frac{\rho_N}{\bar\rho_N}}^2}{\,\dr X}
	+C\frac{\norm{\bar\rho}^2_{C^1}\norm{V}_{L^\iy}^2}{\nu(\inf\bar\rho)^2}\pa{\HN(\rho_N|\bar\rho_N)+\frac1N}.
 } 
 
 Next, we work on obtaining a bound on $B$. Defining
 \eq{
 	\phi(x,z)=(V(x,z)-T_V\bar\rho(x)):\frac{\na_x^2\bar\rho(x)}{\bar\rho(x)}
 } lets us rewrite
\eq{
	B = \frac{1}{N^2}\int_{\T^{4N}}\!{\rho_N\phi(x_i,x_j)}{\,\dr X}
	\le\frac1\eta\HN(\rho_N|\bar\rho_N)+\frac{1}{N\eta}\int_{\T^{4N}}\!{\bar\rho_N e^{\frac1N\sum_{i,j}\eta\phi(x_i,x_j)}}{\,\dr X}
} by application of Lemma \ref{JW2018Lem1} to $\Phi=\frac{1}{N^2}\sum_{i,j}\eta\phi(x_i,x_j)$. Meanwhile, integration by parts gives us that
\eq{
	\int_\Td\!{(V(x,z)-T_V\bar\rho(x)):\frac{\na_x^2\bar\rho(x)}{\bar\rho(x)}\bar\rho(x)}{\,\dr x}
	=\int_\Td\!{(\na_x\cdot \bar K(x,z)-\na_x\cdot\bar{\ca K}\bar\rho(x))\bar\rho(x)}{\,\dr x}=0
} so that $\phi$ satisfies the double cancellation property of Proposition \ref{JW2018Thm4}. We also note from the definition of $\phi$ that
\eq{
	\norm{\phi}_{L^\iy}\le \frac{\norm{V}_{L^\iy}^2}{\inf\bar\rho}\norm{\bar\rho}_{C^2}.
} Choosing
\eq{
	\eta=\frac{\inf\bar\rho}{C\norm V_{L^\iy}\norm{\bar\rho}_{C^2}}
} means we can use Proposition \ref{JW2018Thm4} to bound
\eq{
	\int_{\T^{4N}}\!{\bar\rho_N e^{\frac1N\sum_{i,j}\eta\phi(x_i,x_j)}}{\,\dr X}\le C
} for an absolute constant $C$, giving the bound
\eq{\label{finalBbound}
	B\le C\frac{\norm V_{L^\iy}\norm{\bar\rho}_{C^2}}{\inf\bar\rho}\pa{\HN(\rho_N|\bar\rho_N)+\frac1N}.
} We now combine this with the bound on $A$ in \eqref{finalAbound}, which concludes the proof.
\end{proof}

We are now able to prove the main theorem of this report as an easy corollary of the preceding work.

\begin{proof}[Proof of Theorem \ref{mainthm}] Taking Lemma \ref{JW2018Lem2} and applying Lemma \ref{JW2018Lem3} allows us to write
\eq{
	\HN(\rho_N|\bar\rho_N)(t)\le\HN(\rho_N^0|\bar\rho_N^0) + CM_K\int_0^t\!{\pa{\HN(\rho_N|\bar\rho_N)(s)+\frac1N}}{\,\dr s}
} using the formula for $M_K$ obtained in Proposition \ref{JW2018Thm3}, but then by Gronwall's lemma, we get that 
\eq{
	\HN(\rho_N|\bar\rho_N)(t)\le e^{M_K t}\pa{\HN(\rho_N^0|\bar\rho_N^0)+\frac1N}
} which concludes the proof.
\end{proof}

\section{Well-posedness Theory}\label{wellposedness}

We now prove the well-posedness theorem for our generalized convection diffusion equation,
\eq{
	\p_t\omega+u\cdot\na\omega = \nu\De\omega,\quad u = K\omega\qquad\hbox{ on }\Td.
}

A local existence result follows essentially immediately, by usual fixed point methods. 
\begin{lem}\label{locintime} The problem \eqref{nonlinnonloc} set up as in Theorem \ref{nonlinnonlocthm} is locally well-posed.
\end{lem}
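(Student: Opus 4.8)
The plan is to recast \eqref{nonlinnonloc} in mild (Duhamel) form and solve it by a Banach fixed-point argument in the space $X_T := C([0,T],C^k(\Td))$ for sufficiently small $T$. The crucial structural observation is that the incompressibility condition (1) lets us rewrite the nonlinearity in divergence form, $u\cdot\na\omega = \na\cdot(u\omega)$ since $\na\cdot u = 0$, so that a solution should satisfy
\[
	\omega(t) = e^{t\nu\De}\omega_0 - \int_0^t e^{(t-s)\nu\De}\,\na\cdot\big(u(s)\omega(s)\big)\,\dr s, \qquad u(s)=K\omega(s).
\]
Writing this as $\omega = \Phi(\omega)$, I would show that $\Phi$ is a contraction on a suitable closed ball of $X_T$, and invoke the contraction mapping principle.

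The two analytic inputs are: (i) the smoothing estimates for the heat semigroup on the torus, namely $\norm{e^{t\nu\De}f}_{C^k}\le C\norm{f}_{C^k}$ together with the gain-of-derivative bound $\norm{\na\cdot e^{t\nu\De}g}_{C^k}\le C(\nu t)^{-1/2}\norm{g}_{C^k}$, both of which follow from writing $e^{t\nu\De}$ as convolution with the heat kernel $G_{\nu t}$ and using $\norm{\na G_{\nu t}}_{L^1}\le C(\nu t)^{-1/2}$ with Young's inequality in sup-norm; and (ii) the fact that $C^k(\Td)$ is a Banach algebra, $\norm{fg}_{C^k}\le C\norm{f}_{C^k}\norm{g}_{C^k}$. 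Combining these with the linearity and boundedness condition (2), which gives $\norm{K\omega}_{C^k}\le M_K\norm{\omega}_{C^k}$, yields $\norm{u\omega}_{C^k}\le CM_K\norm{\omega}_{C^k}^2$ and hence, since $\int_0^t(\nu s)^{-1/2}\dr s = 2\sqrt{t/\nu}$,
\[
	\norm{\Phi(\omega)}_{X_T}\le C\norm{\omega_0}_{C^k} + \frac{CM_K\sqrt T}{\sqrt\nu}\,\norm{\omega}_{X_T}^2.
\]
On the ball of radius $R=2C\norm{\omega_0}_{C^k}$, choosing $T$ small (depending on $\norm{\omega_0}_{C^k}$, $M_K$, $\nu$) makes the right-hand side at most $R$, so $\Phi$ maps the ball to itself.

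For the contraction estimate I would expand, using linearity of $K$, the difference $u_1\omega_1-u_2\omega_2 = (K\omega_1)(\omega_1-\omega_2) + \big(K(\omega_1-\omega_2)\big)\omega_2$ and apply the same bounds to obtain $\norm{\Phi(\omega_1)-\Phi(\omega_2)}_{X_T}\le CM_K\sqrt T\,\nu^{-1/2}\,R\,\norm{\omega_1-\omega_2}_{X_T}$ for $\omega_1,\omega_2$ in the ball; shrinking $T$ further makes the prefactor strictly less than $1$. The contraction mapping principle then produces a unique mild solution in the ball, and the Lipschitz dependence of the fixed point on the data, entering through the first Duhamel term $e^{t\nu\De}\omega_0$, yields continuous dependence on $\omega_0$, completing local well-posedness.

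The main obstacle, and the only place where the hypotheses of Theorem \ref{nonlinnonlocthm} are truly essential, is controlling the transport term without losing a derivative: in its raw form $u\cdot\na\omega$ costs a full derivative on $\omega$ and cannot be closed within $C^k$. Incompressibility (1) is precisely what converts it into divergence form, after which the parabolic smoothing estimate in (i) absorbs that derivative at the cost of the integrable singularity $(\nu t)^{-1/2}$, whose time integral supplies the small factor $\sqrt T$ that drives the contraction. The remaining steps are routine.
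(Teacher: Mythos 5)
Your proposal is correct and follows essentially the same route as the paper: a Banach fixed-point argument in $C([0,T],C^k(\Td))$ on the mild (Duhamel) formulation, using incompressibility to write the nonlinearity in divergence form and the $\norm{\na G_t}_{L^1}\le Ct^{-1/2}$ heat-kernel smoothing to absorb the derivative, with the same bilinear splitting for the contraction estimate. The only cosmetic difference is that the paper first rescales time to normalize $\nu=1$ whereas you carry $\nu$ explicitly through the estimates.
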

We first note that the rescaling $t\nu = t'$ reduces to \eqref{nonlinnonloc} with $\nu=1$ and the rescaled flow operator $\frac1\nu K$, so we assume $\nu=1$ in the follows. Now, in order to apply fixed point methods, we use that $K\omega$ is divergence-free and rewrite \eqref{nonlinnonloc} as the integral equation
\eq{
	\omega(t) = e^{ t\De}\omega_0-\int_0^t\!{e^{(t-s)\De}\na\cdot(u\omega)(s)}{\,\dr s},\quad u = K\omega.
}
The meat of the proof now comes from estimates on $e^{t\De}\na$. We are helped here by the fact that $e^{t\De}$ is a convolutional operator whose kernel can be written explicity, as 
\eq{\label{originalheat kernel}
	G_t(x)=\frac{1}{(4\pi t)^{d/2}}e^{-\abs x^2/4t}
} on $\Rd$, and as the periodization of the above on $\Td$. Simple computations of Gaussian integrals then give us that
\eq{
	\norm{\na G_t}_{L^1(\Rd)} = \frac{1}{(4\pi t)^{d/2}}\int_\Rd\!{\frac{\abs x}{2t}e^{-\abs x^2/4t}}{\,\dr x}
	=\frac{\Ga\pa{frac{d+1}{2}}}{\Ga\pa{frac d2}} t^{-\frac12}.
} This then gives us the bound
\eq{
	\norm{\na G_t}_{L^1(\Td)} \le C t^{-\frac12}
} for a dimensional constant $C$, where by slight abuse of notation we write $G_t$ both for \eqref{originalheatkernel} and its periodization. Defining the Picard iteration
\eq{
	\ca R\omega(t) = e^{t\De}\omega_0 - \int_0^t\!{e^{(t-s)\De}\na\cdot(u\omega)(s)}{\,\dr s},\quad u=K\omega
} for $\omega\in C([0,T],C^k(\Td))$, we get that 
\eq{
	\na^\ell\ca R\omega(t) = e^{t\De}\na^\ell\omega_0-\int_0^t\!{e^{(t-s)\De}\na\cdot\na^\ell(u\omega)(s)}{\,\dr s},\quad u = K\omega,
} so, writing $M$ for the operator norm of $K:C^k(\Td)\to C^k(\Td)$, we get
\eq{
	\norm{\na^\ell\ca R\omega(t)}_{C^0} &\le \norm{\na^\ell\omega_0}_{C^0} + C\int_0^t\!{(t-s)^{-\frac12}\norm{\na^\ell(u\omega)(s)}_{C^0}}{\,\dr s} \\
	&\le \norm{\na^\ell\omega_0}_{C^0} + CM\sqrt T \norm{\omega}_{C^0([0,T],C^k)}^2
} for $C$ dependent on $d$ and $k$. Defining $X_T^k$ as the space $C^0([0,T],C^k(\Td))$ gives us the estimate
\eq{
	\norm{\ca R\omega}_{X_T^k}\le\norm{\omega_0}_{C^k}+CM\sqrt T\norm{\omega}_{X_T^k}^2.
} But then we see that for $$T\le\frac{1}{16C^2 M^2\norm{\omega_0}_{C^k}^2},$$ this maps the ball $\overline{B_{X_T^k}}(0,2\norm{\omega_0}_{C^k})$ to itself. To show that $\ca R$ is a contraction, we similarly estimate that for $\ell\le k$ and $t\in[0,T]$,
\eq{
	\norm{\na^\ell(\ca R\omega-\ca R\omega')(t)}_{C^0} &\le C\int_0^t\!{(t-s)^{-\frac12}\norm{\na^\ell(u\omega)(s)-\na^\ell(u'\omega')(s)}_{C^\ell}}{\,\dr s},\qquad u = K\omega,\ u'=K\omega' \\
	&\le CM\sqrt T (\norm{\omega}_{X_T^k}+\norm{\omega}_{X_T^k})\norm{\omega-\omega'}_{X_T^k},
} by calculations of type 
$$\p^\alpha u\,\p^\beta\omega-\p^\alpha u'\p^\beta\omega' = 
	\p^\beta\omega(\p^\alpha u -\p^\alpha u') + \p^\alpha u'(\p^\beta\omega-\p^\beta\omega'),$$ giving us the estimate
\eq{
	\norm{\ca R\omega-\ca R\omega'}_{X_T^k}\le CM\sqrt T(\norm{\omega}_{X_T^k}+\norm{\omega}_{X_T^k})\norm{\omega-\omega'}_{X_T^k}.
}	
Therefore, if we additionally impose that $$T<\frac{1}{4 C^2 M^2}$$ for a possibly bigger $C$, we have that $\ca R$ is a contraction on $\overline{B_{X_T^k}}(0,2\norm{\omega_0}_{C^k})$ and thus has a unique solution in that space. By continuity of solutions $t\mapsto \omega(t)\in C^k$ (and continuous dependence of $\ca R$ on $\omega_0\in C^k$), we have local-well posedness, valid up to blowup of the $C^k$ norm, which concludes the proof of Lemma \ref{locintime}.

For global-in-time well posedness, we make the critical observation that as a transport equation, \eqref{nonlinnonloc} remains bounded in $C^0$ for the full interval of existence of solutions.

\begin{prop} The equation \eqref{nonlinnonloc} satisfies a maximal principle, that is, for any solution $\omega\in C([0,T],C^0)$ to \eqref{nonlinnonloc} with initial condition $\omega_0\in C^0$, whenever $A\le\omega_0\le B$ for constants $A,B\in\R$, we have $A\le\omega(t)\le B$ for all $t$. Consequently, we have global existence of solutions to \eqref{nonlinnonloc} in $C^0$.
\end{prop}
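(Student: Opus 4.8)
The plan is to prove the maximum principle by a convex truncation (penalization) argument that exploits the incompressibility of the drift $u = K\omega$. I would fix a nonnegative convex function $\beta\in C^2(\R)$ with $\beta\equiv 0$ on $(-\iy,B]$ and $\beta>0$ on $(B,\iy)$ — for instance a smoothing of $s\mapsto((s-B)_+)^2$ — and study the quantity $t\mapsto\int_{\Td}\beta(\omega(t))\,\dr x$. Differentiating in time and substituting \eqref{nonlinnonloc} gives
\[
	\frac{\dr}{\dr t}\int_{\Td}\beta(\omega)\,\dr x = \nu\int_{\Td}\beta'(\omega)\De\omega\,\dr x - \int_{\Td}\beta'(\omega)\,u\cdot\na\omega\,\dr x.
\]
I would then treat the two terms separately: integrating the diffusion term by parts yields $-\nu\int_{\Td}\beta''(\omega)\abs{\na\omega}^2\,\dr x\le 0$ by convexity of $\beta$, while the transport term equals $\int_{\Td}u\cdot\na(\beta(\omega))\,\dr x = -\int_{\Td}(\na\cdot u)\beta(\omega)\,\dr x = 0$ by the incompressibility hypothesis $\na\cdot(K\omega)=0$.

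Hence $\int_{\Td}\beta(\omega(t))\,\dr x$ is nonincreasing; since $\omega_0\le B$ forces $\int_{\Td}\beta(\omega_0)\,\dr x=0$ and $\beta\ge0$, I conclude $\int_{\Td}\beta(\omega(t))\,\dr x=0$, i.e. $\omega(t)\le B$ for all $t$, and the symmetric choice penalizing $(A-s)_+$ gives $\omega(t)\ge A$. This computation is rigorous whenever $\omega$ is a classical solution, which by Lemma \ref{locintime} holds on the interval of existence as soon as $\omega_0\in C^k$ with $k\ge2$; to reach a general $\omega_0\in C^0$ I would mollify the data into $\omega_0^n\to\omega_0$ in $C^0$ — which preserves the bounds $A\le\omega_0^n\le B$ — apply the estimate to the resulting smooth solutions, and pass to the limit by a standard stability/approximation argument built on the continuous dependence on initial data furnished by Lemma \ref{locintime}.

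Finally, for global existence I would combine the uniform $C^0$ bound just obtained with the continuation criterion of the local theory. Since $\norm{\omega(t)}_{C^0}\le\max(\abs A,\abs B)$ for all $t$, both the velocity $u=K\omega$ and the nonlinearity $u\omega$ remain bounded in $C^0$ (using that $K$ maps $C^0$ to $C^0$). Feeding these bounds into the Duhamel representation together with the smoothing estimate $\norm{\na G_t}_{L^1(\Td)}\le Ct^{-1/2}$ from Lemma \ref{locintime}, a standard parabolic bootstrap controls $\norm{\omega(t)}_{C^k}$ on every finite interval, and since the lifespan in Lemma \ref{locintime} is bounded below in terms of $\norm{\omega_0}_{C^k}$, this precludes finite-time blowup and yields global existence. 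I expect two technical points to be the main obstacles: justifying the penalization computation at low regularity (resolved by the approximation step above), and the fact that one cannot gain a full derivative directly from a $C^0$ bound, since $\na^2 G_t$ is not time-integrable — so the bootstrap must first pass through an intermediate Hölder gain $C^0\to C^\ga$ with $\ga<1$, where $\norm{e^{t\De}\na}_{C^0\to C^\ga}\le Ct^{-(1+\ga)/2}$ stays integrable, before climbing to integer regularity.
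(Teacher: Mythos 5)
Your argument is correct, but the core step is genuinely different from the paper's. The paper proves the bound pointwise: it takes smooth data ($\omega_0\in C^3$, so that the local theory gives a classical solution), perturbs to $\omega-\ve t$, and evaluates the equation at a spatial maximum, where $\na\omega=0$ kills the transport term and $\De\omega\le0$ gives the sign of the diffusion term; the conclusion is that $\max_x\omega(\cdot,t)$ is nonincreasing, and then $C^0$ data are handled by density, exactly as in your last step. Note that the pointwise route never uses incompressibility --- the transport term vanishes automatically at a critical point --- whereas your penalization argument leans on $\na\cdot u=0$ to discard $\int_{\Td} u\cdot\na(\beta(\omega))\,\dr x$; since incompressibility is a standing hypothesis of Theorem \ref{nonlinnonlocthm} this costs nothing here, and in exchange your $L^1$-in-space argument avoids the $-\ve t$ trick and would adapt more readily to weak or renormalized solutions. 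Both routes need the same $C^2$ regularity to justify the computation and the same approximation of $C^0$ data (mollification preserves $A\le\omega_0\le B$, as you say). For the global-existence consequence, your continuation argument is right, though slightly more elaborate than necessary: since the local lifespan for $k=0$ is already bounded below in terms of $\norm{\omega_0}_{C^0}$ alone, the uniform $C^0$ bound precludes blowup directly, without routing through higher $C^k$ norms; your remark that a bootstrap from $C^0$ must pass through an intermediate H\"older gain (because $\norm{\na^2 G_t}_{L^1}\sim t^{-1}$ is not time-integrable) is a valid and relevant observation, but it pertains to the smoothing statement of Theorem \ref{nonlinnonlocthm}, which the paper handles separately by iterating its inequality \eqref{ineq1}, not to this proposition.
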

\begin{proof} We exploit the local existence theory in $C^k$ to obtain classical solutions, and prove the maximum principle there. Assuming $\omega_0\in C^3$, we get by boundedness of $K:C^3\to C^3$ that $\omega\in C^1([0,T],C^1)\cap C^0([0,T],C^3)$. We use the usual approximation trick, letting $\omega' = \omega=\omega-\ve t$ and prove a maximum principle for $\omega'$. Fixing a $0\le t\le T$, we let $x\in\Td$ maximimize $\omega'(\cdot,t)$, so that $\nabla\omega(x)=0$ and $\De\omega(x)\ge 0$. However, since $$\p_t\omega'+u'\cdot\na\omega'<\nu\De\omega',$$ we get that $\max_x\omega'(x,t)$ is strictly decreasing for $t\in [0,T]$, and therefore $\max_x\omega(x,t)$ is decreasing for all $t$. The minimum principle is proved similarly. Density $C^3\subset C^0$ then proves the claims.
\end{proof}

Finally, in order to have global well-posedness in $C^k$, we use a bootstrap argument to improve regularity over time. In fact, just as we expect for 2D Navier Stokes, we have immediate smoothing of solutions due to the viscosity term, which we prove estimating $\norm{\na^{\ell+1}\omega(t)}_{C^0}$ as a function of $\norm{\na^\ell\omega_0}_{C^0}$. To do this, we assume that we have a solution $\omega\in C([0,T^*],C^{k+1})$, and writing 
\eq{
	\na^{\ell+1}\omega(t)= e^{t\De}\na(\na^\ell\omega_0)-\int_0^t\!{e^{(t-s)\De}\na\cdot\na^{\ell+1}(u\omega)(s)}{\,\dr s},
} for any $\ell\le k$, we may immediately estimate
\eq{
	\norm{\na^{\ell+1}\omega(t)}_{C^0}&\le C t^{-\frac12}\norm{\na^\ell\omega_0}_{C^0} + 
			C\int_0^t\!{(t-s)^{-\frac12}\norm{\na^{k+1}(u\omega)(s)}}{\,\dr s} \\
			&\le C t^{-\frac12}+C\int_0^t\!{(t-s)^{-\frac12}\norm{\omega(s)}_{C^k}\norm{\omega(s)}_{C^{k+1}}}{\,\dr }
} so that 
\eq{\label{ineq1}
	\norm{\omega(t)}_{C^{k+1}}\le C t^{-\frac12}\norm{\omega_0}_{C^k}+C\int_0^t\!{(t-s)^{-\frac12}\norm{\omega(s)}_{C^k}\norm{\omega(s)}_{C^{k+1}}}{\,\dr s}.
} 

When $\ell=0$, since $\norm{\omega(s)}_{C^0}\le\norm{\omega_0}_{C^0}$, we can iterate the inequality \eqref{ineq1} to get
\eq{
	\norm{\omega(t)}_{C^1}&\le C't^{-\frac12}+C'\int_0^t\!{(t-s)^{-\frac12}\norm{\omega(s)}_{C^1}}{\,\dr s} \\ 
		&\le C' t^{-\frac12}+C'^2\int_0^t\!{(t-s)^{-\frac12}s^{-\frac12}}{\,\dr s} + C'^2\int_0^t\!{\int_0^s\!{(t-s)^{-\frac12}(s-r)^{-\frac12}\norm{\omega(r)}_{C^1}}{\,\dr r}}{\,\dr s} \\
		&\le C' t^{-\frac12} + \pi C'^2 + \pi C'^2\int_0^t\!{\norm{\omega(r)}_{C^1}}{\,\dr r}
} for $C'$ a constant now dependent on $\norm{\omega_0}_{C^0}$ also, and where we used the Beta function identity
$$ \int_0^t\!{(t-s)^{-\frac12}s^{-\frac12}}{\,\dr s}=\pi$$
twice. This then allows us to conclude by Gronwall that
\eq{
	\norm{\omega(t)}_{C^1} \le C(1+t^{-\frac12})+C\int_0^t\!{(1+s^{-\frac12})e^{C(t-s)}}{\,\dr s}
	\le C(1+t^{-\frac12}+e^{Ct}).
} 
Procluding blowup for the higher $C^{\ell+1}$ norms proceeds in a similar way, and by induction. Fixing any $0<\ve<T'<\iy$, we show by induction that there exists a constant $C = C(\ve,T,\norm{\omega_0}_{C^0},d,\ell)$ such that $\norm{\omega(t)}_{C^{\ell+1}}\le C$ for any $t\in [\ve,T]$ such that a solution is defined. Assuming that the statement was already proved for $\norm{\omega(t)}_{C^\ell}$, we write $\ve'=\ve/2$ and get that
\eq{
	\norm{\omega(t)}_{C^{\ell+1}}\le C\norm{\omega(\ve')}_{C^\ell}(t-\ve')^{-\frac12} + C\int_{\ve'}^t\!{(t-s)^{-\frac12}\norm{\omega(s)}_{C^{\ell+1}}}{\,\dr s}
} which implies
\eq{
	\norm{\omega(t)}\le C(1+(t-\ve')^{-\frac12} + e^C(t-\ve')) \le C
} but then this proves the claim, and thus we conclude the proof of Theorem \ref{nonlinnonlocthm}.

\section{Entropy Estimate}

We now prove the entropy estimate to the Liouville equation
\eq{\label{Liou2}
	\p_t\rho + \frac1N\sum_{i,j=1}^N \bar K(x^i,x^j)\cdot\na_{x^i}\rho = \nu\sum_{i=1}^N\De_{x^i}\rho.
} 
We follow the methods of Proposition 1 in \cite{Jabin-Wang2018}.

\begin{proof}[Proof of Lemma \label{entest}] We note that, replacing $\bar K$ by a (divergence-free) regularized field $\bar K_\ve\in C^\iy(\T^4)$ and solving for

\eq{
	\p_t\rho_\ve + \frac1N\sum_{i,j=1}^N \bar K_\ve(x^i,x^j)\cdot\na_{x^i}\rho_\ve = \nu\sum_{i=1}^N\De_{x^i}\rho_\ve.
} gives us the entropy estimate 
\eq{
	\int_{\T^{4N}}\!{\rho_\ve(t,X)\log\rho_\ve(t,X)}{\,\dr X} + \nu\sum_{i=1}^N\int_0^t\!{\int_{\T^{4N}}\!{
		\frac{\abs{\na_{x_i}\rho_\ve}^2}{\rho_\ve}}{\,\dr X}}{\,\dr s}\le 
		\int_{\T^{4N}}\!{\rho_\ve^0(X)\log\rho_\ve^0(X)}{\,\dr X}
} for free, since
\eq{
	\frac{\dr}{\dr t}\int\!\rho_\ve\log\rho_\ve\,\dr X &= \int\!\p_t\rho_\ve[\log\rho_\ve+1]\,\dr X \\
	&=-\int\!\frac1N\sum_{i,j=1}^N\bar K_\ve(x^i,x^j)\cdot\na_{x^i}\rho_\ve[\log\rho_\ve+1]
		+ \int\!\nu\sum_{i=1}^N\De_{x^i}\rho_\ve[\log\rho_\ve+1]\,\dr X \\
	&= \frac1N\sum_{i,j=1}^N\int\!\bar K_\ve(x^i,x^j)\cdot\na_{x^i}\rho_\ve - \nu\sum_{i=1}^N\int\!\frac{\abs{\na_{x^i}\rho_\ve}^2}{\rho_\ve}\,\dr X \\
	&= - \nu\sum_{i=1}^N\int\!\frac{\abs{\na_{x^i}\rho_\ve}^2}{\rho_\ve}\,\dr X 
} where we twice integrate by parts in the convection term, and use that $\na_x\cdot\bar K_\ve(x,x')=0$.

Now, since $\rho_\ve$ are all probability densities, we can use that
\eq{
	\int\!\frac{\abs{\na\rho_\ve}}{\rho_\ve}\rho_\ve\,\dr X\le\pa{\int\!\frac{\abs{\na\rho_\ve}^2}{\rho_\ve^2}\rho_\ve\,\dr X}^{1/2}
} which gives us that $\rho_\ve$ is a bounded sequence in $C([0,T],W^{1,1}(\T^{4N}))$. Thus, we can extract a strongly convergent subsequence, which wtill satisfy the entropy estimate
\eq{
	H(\rho)(t) + \nu\int_0^t\!\int_{\T^{4N}}\!\frac{\abs{\na\rho}^2}{\rho}\,\dr x\,\dr s\le H(\rho^0)
} for all $t\in[0,T]$.

To prove the second estimate, we just note that by convexity \cite{MischlerMouhot}, we have 
\eq{
	\int_{\T^{2\ti4}}\!\frac{\abs{\na_{x_1}\rho_{N,2}}^2}{\rho_{N,2}}\,\dr x_1\,\dr x_2\le
		\int_{\T^{4N}}\!\frac{\abs{\na_{x_1}\rho_N}^2}{\rho_N}\,\dr X.
} Using the representation $K(x,x')=\na_x\cdot V(x,x')$, we get
\eq{
	\int_{\T^{2\ti 4}}\!\bar K(x_1,x_2)\phi(x_1,x_2)\rho_{N,2}(x_1,x_2)\,\dr x_1\,\dr x_2
	&=-\int_{\T^{2\ti4}}\!\bar V(x_1,x_2)(\phi\na_{x_1}\rho_{N,2}+\na_{x_1}\phi\rho_{N,2}\,\dr x_1\,\dr x_2\\
	&\le\norm{\phi}_{C^1}\norm{V}_{L^\iy}
		\pa{\int_{\T^{2\ti 4}}\!\frac{\abs{\na_{x_1}\rho_{N,2}}^2}{\rho_{N,2}}\,\dr x_1\,\dr x_2}^{\frac12}
} which by integrating and applying Holder and then Young and then the previous part of the proposition, proves the claim.
\end{proof}

\section{Conclusion}

We have presented a new approach to the mixed-sign point vortex system, by using as an intermediary a tensorized form of Navier-Stokes and applying large deviation estimates from to the mean field error. 
Our result shows that a mixed sign point vortex model is an accurate model for Navier-Stokes on the torus, and that we have convergence of the marginal particle distributions at the optimal rate of $O(N^{-1/2})$, as in \cite{Jabin-Wang2018}. 

\subsection{Further questions}
 Since the constant in our estimate blows up as $\nu^{-1}$ in the vanishing viscosity limit, we are unable to treat this case. The vanishing viscosity case is covered in \cite{Jabin-Wang2018} but with the stricter condition that the force $K$ obeys $\abs x K(x)\in L^\iy$, which does not allow for our tensorized kernel. Previous mean field results for the mixed sign inviscid case use either initial data on a grid \cite{Goodman-Hou-Louwengrub1990} or use compactness methods and stochastic initial data chosen according to specific schemes to be well distributed \cite{Schochet1996}.
 
 Additionally, for physical and numerical reasons, we would like to generalize the two-species model to a multi-species or a stochastically chosen vortex strength model, as in \cite{Fournier-Hauray-Mischler2014}. Indeed, early results on convergence of the point vortex model generally used nonidential vortex strenghts \cite{Hald1978} \cite{Goodman-Hou-Louwengrub1990}. While the multi-species model is a straightforward rewriting of our arguments however, a stochastic vortex strength model would require a somewhat more careful analysis.

\printbibliography

\end{document}